\newtheorem{theo}{Theorem}[section]
\newtheorem{lemma}[theo]{Lemma}
\newtheorem{claim}{Claim}
\newtheorem{cor}[theo]{Corollary}
\newtheorem{defi}[theo]{Definition}
\newenvironment{proofof}[1]{\begin{proof}[Proof of #1]}{\end{proof}}
\newcommand{\Ex}{\mathbb{E}}
\newcommand{\distr}{\mathcal{D}}
\newcommand{\BPPPATH}{\mathsf{BPP_{path}}}
\newcommand{\SZK}{\textsf{SZK}}
\newcommand{\BQP}{\textsf{BQP}}
\newcommand{\AM}{\textsf{AM}}
\newcommand{\PTIME}{\textsf{P}}
\newcommand{\QSZK}{\textsf{QSZK}}
\newcommand{\PH}{\textsf{PH}}
\newcommand{\Fdi}{\mathcal{F}}
\newcommand{\Uni}{\mathcal{U}}
\newcommand{\FdiP}{\mathcal{F}'}
\newcommand{\UniP}{\mathcal{U}'}
\newcommand{\Fore}{\mathsf{For}}
\newcommand{\Fcheck}{\textsf{Forrelation}}
\newcommand{\Simon}{\textsf{Simon}}
\newcommand{\PSimon}{\textsf{Simon}}
\newcommand{\Collision}{\mathsf{Col}}
\newcommand{\PCollision}{\textsf{Collision}}
\newcommand{\Dperm}{\mathcal{P}_{1 \to 1}}
\newcommand{\Dtwoone}{\mathcal{P}_{2 \to 1}}
\newcommand{\Ada}{\mathsf{Ada}}
\newcommand{\poly}{\operatorname*{poly}}
\newcommand{\polylog}{\operatorname*{polylog}}
\newcommand{\Oracle}{\mathcal{O}}
\newcommand{\support}{\mathbf{support}}
\newcommand{\chk}[1]{#1_{\mathsf{chk}}}
\title{A Note on Oracle Separations for $\BQP$}
\date{}
\author{
	Lijie Chen\\
	\small Tsinghua University\\
	\small \texttt{wjmzbmr@gmail.com}
}
\begin{document}
		\maketitle
	\begin{abstract}
	In 2009, using the \textsf{Fourier Checking}\footnote{also called the \Fcheck\  problem in \cite{aaronson2015forrelation}.} problem, Aaronson~\cite{aaronson2010bqp} claimed to construct the relativized worlds such that $\BQP \not\subset \BPPPATH$ and $\BQP \not\subset \SZK$. However, there are subtle errors in the original proof. In this paper, we point out the issues, and rescue these two separations by using more sophisticated constructions.
	
	Meanwhile, we take the opportunity to study the complexity classes $\BPPPATH$ and \SZK. We give general ways to construct functions which are hard for \SZK\ and $\BPPPATH$ (in the query complexity sense). Using these techniques, we give alternative construction for the oracle separation $\BQP \not\subset \SZK$\footnote{It is a folklore result that the quantum walk problem in \cite{childs2003exponential} implies such a separation, see also http://www.scottaaronson.com/blog/?p=114.}, using only Simon's problem~\cite{simon1997power}. We also give new oracle separations for $\PTIME^{\SZK}$ from $\BPPPATH$ and $\PTIME^{\SZK}$ from $\QSZK$. The latter result suggests that $\PTIME^{\SZK}$ might be strictly larger than \SZK. 
	\end{abstract}
	\section{Introduction}
It has been a longstanding open problem in quantum complexity theory to find an oracle separation between $\BQP$ and $\PH$ (see Aaronson~\cite{aaronson2010bqp}). Nevertheless, the current frontier towards this goal is the claimed oracle separation between $\BQP$ and $\BPPPATH$ as shown in \cite{aaronson2010bqp}. In fact, we don't even have an oracle separation between $\BQP$ and $\AM$!

However, we find some subtle errors in the proof for the proposed oracle separations $\BQP^{\Oracle} \not\subset \BPPPATH^{\Oracle}$ and $\BQP^{\Oracle} \not\subset \SZK^{\Oracle}$. It is claimed in \cite{aaronson2010bqp} that almost $k$-wise independence fools $\SZK$ protocols and $\BPPPATH$ machines, yielding the desired oracle separation. Unfortunately, the proof is not correct (see Section~2 for a discussion). But we are also unable to construct a counterexample to either part of this claim. Personally, we feel like almost $k$-wise independence should fool $\SZK$ protocols, but not $\BPPPATH$ machines. Constructing counterexamples or proving either part of the original claim would be an interesting open problem.

In this paper, we rescue these two oracle separations by adding one more twist. For the $\BPPPATH$ case, we show that $\BPPPATH$ machines are unable to distinguish {\em perfectly} two almost $k$-wise independent distributions. Based on that, we prove that the \Fcheck\ problem \cite{aaronson2010bqp,aaronson2015forrelation} is hard for $\BPPPATH$ algorithms (in the query complexity sense), hence the oracle separation $\BQP^{\Oracle} \not\subset \BPPPATH^{\Oracle}$ follows directly. 

For the $\SZK$ case, we find surprisingly that a simple variant of the recent cheat sheet construction by Aaronson, Ben-David and Kothari  \cite{aaronson2015separations} can be used here. Using our new simple construction, we give oracle separation $\BQP^{\Oracle} \not\subset \SZK^{\Oracle}$ using {\em only \PSimon's problem}. Our construction can also be used to prove separation from $\QSZK$, and it works by a black-box fashion: Given any function $f$ with large $R(f)$ ($Q(f)$), we can construct its check-bit version $\chk{f}$, which is hard for any $\SZK$ ($\QSZK$) protocols in the query complexity sense. Utilizing this new tool, we are able to give the new oracle separation $\PTIME^{\SZK^{\Oracle}} \not\subset \QSZK^{\Oracle}$, which is the first non-trivial oracle separation for $\QSZK$ to the best of our knowledge. This also give the oracle evidence that $\SZK$  is strictly contained in $\PTIME^{\SZK}$ as $\SZK \subseteq \QSZK$.

Finally, we establish a method to construct problems which are hard for $\BPPPATH$ algorithms (in the query complexity sense). From this we immediately have the new oracle separation  $\PTIME^{\SZK^{\Oracle}} \not\subset \BPPPATH^{\Oracle}$.
	
	\section{The Issues in the Proof}

In this section we discuss the issues in the proof of~\cite{aaronson2010bqp}.

The proposed separations $\BQP \not\subset \BPPPATH$ and $\BQP \not\subset \SZK$ in \cite{aaronson2010bqp} are based on the following key lemma, which we restate here and recap its original proof in \cite{aaronson2010bqp} for convenience.

\vspace{0.2cm}
\noindent
{\bf Lemma~20~\cite{aaronson2010bqp}}
{\em	
Suppose a probability distribution $\mathcal{D}$ over oracle
strings is $1/t\left(  n\right)  $-almost\ $\operatorname*{poly}\left(
n\right)  $-wise independent,\ for some superpolynomial function $t$. \ Then
no $\mathsf{BPP}_{\mathsf{path}}$\ machine or $\mathsf{SZK}$\ protocol can
distinguish $\mathcal{D}$\ from the uniform distribution $\mathcal{U}$\ with
non-negligible bias.
}

\begin{proof}[Proof in \cite{aaronson2010bqp}]
	Let $M$ be a $\mathsf{BPP}_{\mathsf{path}}$\ machine, and let $p_{\mathcal{D}%
	}$\ be the probability that $M$ accepts an oracle string drawn from
	distribution $\mathcal{D}$. \ Then $p_{\mathcal{D}}$\ can be written as
	$a_{\mathcal{D}}/s_{\mathcal{D}}$, where $s_{\mathcal{D}}$\ is the fraction of
	$M$'s computation paths that are postselected, and $a_{\mathcal{D}}$\ is the
	fraction of $M$'s paths that are both postselected and accepting. \ Since each
	computation path can examine at most $\operatorname*{poly}\left(  n\right)
	$\ bits\ and $\mathcal{D}$ is $1/t\left(  n\right)  $%
	-almost\ $\operatorname*{poly}\left(  n\right)  $-wise independent, we have%
	\[
	1-\frac{1}{t\left(  n\right)  }\leq\frac{a_{\mathcal{D}}}{a_{\mathcal{U}}}%
	\leq1+\frac{1}{t\left(  n\right)  }~\ \ \text{and \ }1-\frac{1}{t\left(
		n\right)  }\leq\frac{s_{\mathcal{D}}}{s_{\mathcal{U}}}\leq1+\frac{1}{t\left(
		n\right)  }.
	\]
	Hence%
	\[
	\left(  1-\frac{1}{t\left(  n\right)  }\right)  ^{2}\leq\frac{a_{\mathcal{D}%
		}/s_{\mathcal{D}}}{a_{\mathcal{U}}/s_{\mathcal{U}}}\leq\left(  1+\frac
	{1}{t\left(  n\right)  }\right)  ^{2}.
	\]

	Now let $P$ be an $\mathsf{SZK}$\ protocol. \ Then by a result of Sahai and
	Vadhan \cite{sahai2003complete}, there exist polynomial-time samplable distributions $A$\ and
	$A^{\prime}$\ such that if $P$ accepts, then $\left\Vert A-A^{\prime
	}\right\Vert \leq1/3$, while if $P$ rejects, then $\left\Vert A-A^{\prime
}\right\Vert \geq2/3$. \ But since each computation path can examine at most
$\operatorname*{poly}\left(  n\right)  $\ oracle bits\ and $\mathcal{D}$ is
$1/t\left(  n\right)  $-almost\ $\operatorname*{poly}\left(  n\right)  $-wise
independent, we have $\left\Vert A_{\mathcal{D}}-A_{\mathcal{U}}\right\Vert
\leq1/t\left(  n\right)  $\ and $\left\Vert A_{\mathcal{D}}^{\prime
}-A_{\mathcal{U}}^{\prime}\right\Vert \leq1/t\left(  n\right)  $, where the
subscript denotes the distribution from which the oracle string was drawn.
\ Hence%
\[
\left\vert \left\Vert A_{\mathcal{D}}-A_{\mathcal{D}}^{\prime}\right\Vert
-\left\Vert A_{\mathcal{U}}-A_{\mathcal{U}}^{\prime}\right\Vert \right\vert
\leq\left\Vert A_{\mathcal{D}}-A_{\mathcal{U}}\right\Vert +\left\Vert
A_{\mathcal{D}}^{\prime}-A_{\mathcal{U}}^{\prime}\right\Vert \leq\frac
{2}{t\left(  n\right)  }%
\]
and no $\mathsf{SZK}$\ protocol\ exists.
\end{proof}

Now we discuss the subtle errors in the proof above. For the $\BPPPATH$ case, the problem is that  $p_{\distr}$ {\em can't} be written as $a_{\distr}/s_{\distr}$. On an input $x$, let $s_x$ denote the fractions of $M$'s computation paths that are postselected, and $a_x$ denote the fractions of $M$'s computation paths that are postselected and accepting, then the probability that $M$ accepts $x$ is $a_x/s_x$. So the probability $p_{\distr}$ that $M$ accepts an oracle string drawn from $\distr$ is in fact $\Ex_{x \sim \distr} [a_x/s_x]$, which certainly does not equal $a_\distr/s_\distr$.

For the $\SZK$ case, the problem is that the statement
\[
\left\vert \left\Vert A_{\mathcal{D}}-A_{\mathcal{D}}^{\prime}\right\Vert
-\left\Vert A_{\mathcal{U}}-A_{\mathcal{U}}^{\prime}\right\Vert \right\vert
\leq\frac
{2}{t\left(  n\right)  }%
\]
does not mean there are no $\SZK$ protocols to distinguish $\distr$ and $\mathcal{U}$. 
Let $A(x)$ and $A'(x)$ be the distributions with input $x$. By the definition, $A_{\distr} = \Ex_{x\sim\distr}[A(x)]$ and $A'_{\distr} = \Ex_{x \sim \distr}[A'(x)]$. Let $p_{\distr}$ be the probability that protocol $P$ accepts an input drawn from $\distr$, which is by definition, $p_{\distr} = \Pr_{x\sim\distr}[\|A(x)-A'(x)\| \le 1/3]$. The intended argument seems like:  
$
\left\vert \left\Vert A_{\mathcal{D}}-A_{\mathcal{D}}^{\prime}\right\Vert
-\left\Vert A_{\mathcal{U}}-A_{\mathcal{U}}^{\prime}\right\Vert \right\vert
$ is small implies $|p_{\distr} - p_{\mathcal{U}}|$ is small too.

But this claim is not correct. Consider the following toy example, let the input domain be $\{0,1,2,3\}$, output domain be $\{0,1,2\}$, and define distributions $A(x)$ and $A'(x)$ as follows:
$$
A(x) = \begin{cases}
\text{constant distribution on } $\{x\}$ & \quad \text{when }x \in \{0,1,2\}\\

\text{uniform distribution over } $\{0,1,2\}$& \quad \text{when }x =3
\end{cases}
$$
and
$$
A'(x) = \begin{cases}
\text{constant distribution on } \{(x+1)\bmod 3\}  & \quad \text{when }x \in \{0,1,2\}\\

\text{uniform distribution over } $\{0,1,2\}$& \quad \text{when }x =3
\end{cases}
$$

Now let $\mathcal{U}$ be the uniform distribution over $\{0,1,2\}$, and $\mathcal{D}$ be the constant distribution on $\{3\}$. We can see that both $\|A_{\mathcal{U}} - A'_{\mathcal{U}}\| $ and  $\|A_{\mathcal{D}} - A'_{\mathcal{D}}\| $ are zero, while clearly $p_{\mathcal{U}} = 0$ and $p_{\mathcal{D}} = 1$.



	\section{Preliminaries}

\subsection{Oracle Separations and Query Complexity}

When proving oracle separations, the standard way is to prove some analogous result in {\em query complexity}, and {\em lift} it to an oracle separation. 

It usually proceeds as follows: Let $N$ denote the input length. We find a problem $\mathcal{P}$ such that any algorithm in complexity class $\mathcal{C}$ needs superlogarithmically many {\em queries} to the input in order to solve it, while there exists an algorithm in complexity class $\mathcal{D}$, which only needs $\polylog(N)$ time. Then by the standard diagonalization method, we can construct an oracle $\Oracle$ such that $\mathcal{D}^{\Oracle} \not\subset \mathcal{C}^{\Oracle}$ {\em unconditionally}.

For convenience, we use $n$ to denote a parameter of the problem size and let $N=N(n)=2^n$.

We use $Q(f)$ to denote the bounded-error quantum query complexity, where the algorithm is only required to be correct with probability at least $2/3$; see the survey on query complexity by Buhrman and de Wolf \cite{buhrman2002complexity} for the formal definition.

\subsection{Complexity Classes}

We assume familiarity with some standard complexity classes like \BQP, \SZK, \QSZK\ and \AM. For completeness, we introduce the less well-known complexity class $\BPPPATH$.


Roughly speaking, $\BPPPATH$ consists of the computational problems can be solved in probabilistically polynomial time, given the ability to {\em postselect} on some event (which may happen with a very small probability). Formally:

\begin{defi}
	$\mathsf{BPP}_{\mathsf{path}}$ (defined by Han, Hemaspaandra, and Thierauf \cite{han1997threshold}) is the class of languages $L\subseteq\left\{
	0,1\right\}  ^{\ast}$\ for which there exists a $\mathsf{BPP}$\ machine $M$,
	which can either \textquotedblleft succeed\textquotedblright\ or
	\textquotedblleft fail\textquotedblright\ and conditioned on succeeding either
	\textquotedblleft accept\textquotedblright\ or \textquotedblleft
	reject,\textquotedblright\ such that for all inputs $x$:
	
	\begin{enumerate}
		\item[(i)] $\Pr\left[  M\left(  x\right)  \text{ succeeds}\right]  >0$.
		
		\item[(ii)] $x\in L\Longrightarrow\Pr\left[  M\left(  x\right)  \text{ accepts
		}|~M\left(  x\right)  \text{ succeeds }\right]  \geq\frac{2}{3}$.
		
		\item[(iii)] $x\notin L\Longrightarrow\Pr\left[  M\left(  x\right)  \text{
			accepts }|~M\left(  x\right)  \text{ succeeds }\right]  \leq\frac{1}{3}$.
	\end{enumerate}
\end{defi}
\subsection{Almost $k$-wise Independence and Its Generalizations}

We introduce the concept of almost $k$-wise independence defined in \cite{aaronson2010bqp}, which will be used frequently throughout this paper. We slightly change the old definition so that it applies to distributions over $\{0,1\}^M$ rather than $\{-1,1\}^M$.

Let $Z=z_{1}\ldots z_{M}\in\left\{  0,1\right\}  ^{M}$ be a string. \ Then a
\textit{literal} is a term of the form $z_i$ or $1-z_i$, and a
$k$-\textit{term }is a product of $k$ literals (each involving a different
$z_{i}$), which is $1$ if the literals all take on prescribed values and $0$ otherwise. 
Let $\mathcal{U}$\ be the uniform distribution over $\left\{  0,1\right\}
^{M}$.

\begin{defi} A distribution $\mathcal{D}$ over $\left\{  0,1\right\}  ^{M}$ is
	$\varepsilon$\textit{-almost }$k$\textit{-wise independent} if for every
	$k$-term $C$,%
	\[
	1-\varepsilon\leq\frac{\Pr_{\mathcal{D}}\left[  C\right]  }{\Pr_{\mathcal{U}%
		}\left[  C\right]  }\leq1+\varepsilon.
	\]
	(Note that $\Pr_{\mathcal{U}}\left[  C\right]  $\ is just $2^{-k}$.)
\end{defi}


We also generalize the above concept in the following way.

\begin{defi}
	Given two distributions $\distr_1$ and $\distr_2$ over $\{0,1\}^{M}$, we say $\distr_1$ $\varepsilon$-almost $k$-wise dominates $\distr_2$ if for every $k$-term $C$,
	
	$$
	\frac{\Pr_{\distr_1}\left[  C\right]  }{\Pr_{\distr_2}\left[  C\right]  } \ge 1 - \varepsilon.
	$$
	
	And we say $\distr_1$ and $\distr_2$ are $\varepsilon$-almost $k$-wise equivalent if they $\varepsilon$-almost $k$-wise dominate each other, i.e., for every $k$-term $C$,
	
	$$
 	1 - \varepsilon \leq
	\frac{\Pr_{\distr_1}\left[  C\right]  }{\Pr_{\distr_2}\left[  C\right]  } \leq 1 + \varepsilon.
	$$
\end{defi}

So a distribution $\distr$ is 
$\varepsilon$\textit{-almost }$k$\textit{-wise independent}, iff it is $\varepsilon$-almost $k$-wise equivalent to the uniform distribution $\Uni$. 

\subsection{Problems}
In this subsection we introduce several problems that will be used throughout this paper.

\subsubsection{\Fcheck}
The first one is \Fcheck, which is first defined by Aaronson \cite{aaronson2010bqp}, and studied again by Aaronson and Ambainis \cite{aaronson2015forrelation}. For convenience, we will assume the inputs are in $\{-1,1\}^{M}$ rather than $\{0,1\}^{M}$. 

\begin{defi}[\Fcheck\ problem]
We are given access to two Boolean functions
$f,g:\left\{  0,1\right\}  ^{n}\rightarrow\left\{ -1,1\right\}  $.  We want
to estimate the amount of correlation between $f$\ and the Fourier transform
of $g$---that is, the quantity%
\[
\Phi_{f,g}:=\frac{1}{2^{3n/2}}\sum_{x,y\in\left\{  0,1\right\}  ^{n}}f\left(
x\right)  \left(  -1\right)  ^{x\cdot y}g\left(  y\right)  .
\]
It is not hard to see that $\left\vert \Phi_{f,g}\right\vert \leq1$\ for all
$f,g$. \ The problem is to decide whether $\left\vert \Phi
_{f,g}\right\vert \leq 0.01$\ or $\Phi_{f,g}\geq 0.07$, promised
that one of these is the case. 

We will use $\Fore_n$ to denote the partial function representing the \Fcheck\ problem with parameter $n$ (evaluates to $1$ when $\Phi_{f,g}\geq 0.07$, and $0$ when $|\Phi_{f,g}|\leq 0.01$), whose input length is $2 \cdot 2^n = 2N$.  When $n$ is clear from the context, we use $\Fore$ for simplicity.
\end{defi}


\subsubsection{\PCollision}

We now recall the \PCollision\ problem, which is to decide whether the input is a permutation or is $2$-to-$1$, promised that one of them is the case.

\begin{defi}[\PCollision\ problem]
We are given access to a function $f: [N] \to [N]$, and want to decide whether $f$ is a permutation or a $2$-to-$1$ function, promised that one of these is the case.

Since we are interested in boolean inputs, we can encode its input as a string in $\{0,1\}^{n \cdot N}$ (recall $N = 2^n$), and we use $\Collision_n$ to denote the \PCollision\ problem with parameter $n$ (evaluates to $1$ when the function is $2$-to-$1$, and $0$ when the function is bijective), whose input length is $n \cdot 2^n = N \log N$. When there is no confusion, we use $\Collision$ for simplicity.
\end{defi}

This problem admits a simple $\SZK$ protocol in which the verifier makes only $\poly(n)$ queries to the input. 

In 2002, Aaronson \cite{aaronson2002quantum} proved the first non-constant lower
bound for the \PCollision\ problem: namely, any bounded-error quantum algorithm to solve
it needs $\Omega(N^{1/5})  $\ queries to $f$. Aaronson and
Shi \cite{aaronson2004quantum} subsequently improved the lower bound to $\Omega(
N^{1/3})$, for functions $f:\left[  N\right]  \rightarrow\left[
3N/2\right]  $; then Ambainis \cite{ambainis2005polynomial} and Kutin \cite{kutin2005quantum}
proved the optimal $\Omega(  N^{1/3})  $\ lower bound for functions
$f:\left[  N\right]  \rightarrow\left[  N\right]  $.

\subsubsection{\PSimon's problem}
Finally we recall the definition of the famous \PSimon's problem.

\begin{defi}[\PSimon's problem]
We are given access to a function $f : \{0,1\}^n \to \{0,1\}^n$ and promised that there exists a ``secret string"  $s\in \{0,1\}^n$ such that $y,z\in\{0,1\}^n$, $f(y)=f(z)$ if and only if $y=z$ or $y \oplus z =s$. The problem then is to find s. We can encode its input as a string in $\{0,1\}^{n\cdot N}$, and we use $\Simon_n$ to denote the \PSimon's problem with parameter $n$. When there is no confusion, we use $\Simon$ for simplicity.
\end{defi}

As shown by Simon \cite{simon1997power}, we have a $\poly(n)$ query quantum algorithm computing $\Simon_n$. Furthermore, it is hard for any classical algorithms to compute it even with a small success probability. We will use the following lemma which follows from a classical result.

\begin{lemma}\label{lm:hard-simon}
	Any $N^{o(1)}$-query randomized algorithm can compute $\Simon_n$ with success probability at most $1/\sqrt[3]{N}$.
\end{lemma}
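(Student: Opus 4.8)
The plan is to reduce from the standard hardness of Simon's problem for randomized algorithms, which says that a randomized algorithm making $q$ queries finds the secret string $s$ with probability at most $O(q^2/2^n)$ (this is the usual ``birthday-style'' bound: until two queries collide, the algorithm has learned nothing about $s$, and the probability that any fixed pair of the first $q$ queries collides is $1/(2^n-1)$). I would first quote this classical result as a black box — it is exactly the kind of ``classical result'' the lemma alludes to. Writing $q = N^{o(1)}$, the success probability is then $O(N^{o(1)}/N) = N^{-1+o(1)}$, which is eventually smaller than $N^{-1/3} = 1/\sqrt[3]{N}$ for all sufficiently large $n$. So the only real content is bookkeeping: translating the input-length convention (the input is encoded as a string in $\{0,1\}^{n\cdot N}$, so ``$N^{o(1)}$ queries'' means $N^{o(1)}$ bit-queries, hence at most $N^{o(1)}$ queries to $f$-values as well, since one $f$-value costs $n = \log N$ bit-queries) and checking that the crude bound $O(q^2/N)$ comfortably beats $N^{-1/3}$.

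The steps, in order: (1) State the classical fact: any $q$-query randomized algorithm outputs the correct $s$ with probability at most $q^2/(2^n-1)$ (or $O(q^2/N)$); cite the appropriate source or include the one-paragraph collision argument. (2) Observe that a $q_{\mathrm{bit}}$-query algorithm on the bit-encoded input is in particular a $q_{\mathrm{bit}}$-query algorithm in the $f$-value oracle model (each $f$-value query can be simulated by $n$ bit queries, so if anything the value-query count is no larger). (3) Set $q = N^{o(1)}$ and compute $q^2/N = N^{o(1)}/N = N^{-1+o(1)}$. (4) Note $N^{-1+o(1)} \le N^{-1/3} = 1/\sqrt[3]{N}$ for all large enough $n$, which gives the claimed bound. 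If one wants the statement to hold for \emph{all} $n$ rather than asymptotically, absorb the small-$n$ cases into the constant or adjust the exponent $1/3$ — but as stated with the $N^{o(1)}$ quantifier this is automatic.

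I don't expect a genuine obstacle here; the lemma is essentially a restatement of textbook Simon hardness with a deliberately lossy exponent ($1/3$ instead of the true $\approx 1$) so that it is robust to the encoding conventions and to whether one counts bit-queries or value-queries. The only thing to be slightly careful about is the direction of the query-model reduction (bit-query algorithms are \emph{at least} as weak as value-query algorithms, not the other way around, which is the direction we need), and making sure ``success probability'' in the source refers to outputting $s$ exactly, matching the definition of $\Simon_n$ given above. Both are routine to verify.
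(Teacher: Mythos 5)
Your proposal is correct, and it follows the same overall strategy as the paper---reduce the lemma to a known classical hardness fact for Simon's problem and then do parameter bookkeeping---but the black-box fact you invoke is different. The paper cites (from the Cleve reference) a query lower bound of the form $\Omega(\sqrt{2^n}\log(1/\epsilon))$ for randomized algorithms with error probability $\epsilon$, plugs in $\epsilon = 1 - N^{-1/3}$, and concludes that achieving success probability $N^{-1/3}$ forces $\Omega(N^{1/6})$ queries, contradicting the $N^{o(1)}$ budget. You instead quote (or re-derive via the birthday/collision argument) a direct success-probability bound: any $q$-query algorithm recovers $s$ with probability $O(q^2/N)$, so $q = N^{o(1)}$ gives success probability $N^{-1+o(1)} \le N^{-1/3}$ for all large $n$. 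Your form is stronger in the low-success regime (it would support an exponent near $1$ rather than the deliberately lossy $1/3$) and is self-contained if you include the short collision argument, whereas the paper's route leans entirely on the cited bound, which it must apply in the regime where the error probability is close to $1$. Your bit-query versus value-query bookkeeping goes in the correct direction (a $q$ bit-query algorithm can be simulated by a $q$ value-query algorithm, so value-query hardness suffices), a point the paper leaves implicit.
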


\begin{proof}
	A classical result (see \cite{cleve1999introduction}) shows that any randomized algorithm solving $\Simon_n$ with error probability at most $\epsilon$, needs $\Omega(\sqrt{2^n}\log(1/\epsilon))$ queries. Plugging in $\epsilon = 1 - N^{1/3}$, it follows that any randomized algorithm with success probability at least $1/\sqrt[3]{N}$, need at least $\Omega\left(\sqrt{N} \cdot \log\left(\frac{1}{1-1/\sqrt[3]{N}}\right)\right) = \Omega(N^{1/6})$ queries, and the lemma follows directly.
\end{proof}
	
	\section{Several Input Distributions}

In this section we construct several useful input distributions for the \Fcheck\ problem and the \PCollision\ problem. These will be the main ingredients in our proofs.

\subsection{\Fcheck}

Let $M = 2 \cdot 2^n = 2 \cdot N$. 

We first introduce the forrelated distribution $\Fdi$ on $\{-1,1\}^M$ defined in \cite{aaronson2010bqp}.


\begin{defi}
	A sample
	$\left\langle f,g\right\rangle $ from $\Fdi$ is generated as follows. First choose a random real vector $v=\left(
	v_{x}\right)  _{x\in\left\{  0,1\right\}  ^{n}}\in\mathbb{R}^{N}$, by drawing
	each entry independently from a Gaussian distribution with mean $0$ and
	variance $1$. \ Then set $f\left(  x\right)  :=\operatorname*{sgn}\left(
	v_{x}\right)  $\ and $g\left(  x\right)  :=\operatorname*{sgn}\left(
	\widehat{v}_{x}\right)  $\ for all $x$. \ Here%
	\[
	\operatorname*{sgn}\left(  \alpha\right)  :=\left\{
	\begin{array}
	[c]{cc}%
	1 & \text{if }\alpha\geq0\\
	-1 & \text{if }\alpha<0
	\end{array}
	\right.
	\]
	and $\widehat{v}$ is the Fourier transform of $v$\ over $\mathbb{Z}_{2}^{n}$:%
	\[
	\widehat{v}_{y}:=\frac{1}{\sqrt{N}}\sum_{x\in\left\{  0,1\right\}  ^{n}%
	}\left(  -1\right)  ^{x\cdot y}v_{x}.
	\]
	In other words, $f$\ and $g$\ \textit{individually}\ are still uniformly
	random, but they are no longer independent: now $g$ is extremely well
	correlated with the Fourier transform of $f$ (hence \textquotedblleft
	forrelated\textquotedblright).
\end{defi}

By the simple transformation $x \to \frac{1+x}{2}$, $\Fdi$ can be viewed as distribution on $\{0,1\}^M$. We introduce the following key theorem from \cite{aaronson2010bqp}.

\begin{theo}[Theorem~19 in \cite{aaronson2010bqp}]
	For all $k\leq\sqrt[4]{N}$, the forrelated distribution
	$\mathcal{F}$ is $O\left(  k^{2}/\sqrt{N}\right)  $-almost $k$-wise independent.
\end{theo}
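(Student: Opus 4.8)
Fix a $k$-term $C$; it prescribes a fixed sign for $f$ at $a$ distinct inputs $x_1,\dots,x_a$ and for $g$ at $b$ distinct inputs $y_1,\dots,y_b$, with $a+b=k$. Since $f(x)=\operatorname{sgn}(v_x)$ and $g(y)=\operatorname{sgn}(\widehat v_y)$, the event $C$ depends only on the signs of the $k$ real random variables $Z=(v_{x_1},\dots,v_{x_a},\widehat v_{y_1},\dots,\widehat v_{y_b})$. As a linear image of the standard Gaussian $v\in\R^N$, the vector $Z$ is centered Gaussian, and a direct computation of its covariance (using $\frac1N\sum_{x}(-1)^{x\cdot z}=\mathds{1}[z=0]$) gives $\Sigma=I_k+E$, where $E$ vanishes on the two diagonal blocks and has entries $E_{pq}=(-1)^{x_p\cdot y_q}/\sqrt N$ on the $a\times b$ cross block and its transpose. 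Hence every entry of $E$ is $0$ or $\pm 1/\sqrt N$, so $\sum_{i<j}|E_{ij}|=ab/\sqrt N\le k^2/(4\sqrt N)$ and $\|E\|_{\mathrm{op}}\le\|E\|_F\le k/(2\sqrt N)$, which is $o(1)$ and in particular $<1$ for $k\le\sqrt[4]N$; thus $\Sigma$ is positive definite and close to $I_k$. Ignoring the probability-$0$ ties $v_x=0$, $\widehat v_y=0$, we have $\Pr_{\Fdi}[C]=\Pr_{\mathcal N(0,\Sigma)}[Z\in R_\eta]$, where $R_\eta$ is the orthant cut out by the prescribed sign pattern $\eta$, and $\Pr_{\Uni}[C]=\Pr_{\mathcal N(0,I_k)}[Z\in R_\eta]=2^{-k}$. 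By flipping coordinates we may assume $\eta=\mathbf 1$; this conjugates $\Sigma$ by a diagonal sign matrix and preserves all the bounds on $E$.

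The plan is to interpolate along $\Sigma_t=I_k+tE$ for $t\in[0,1]$ (which stays positive definite, with the diagonal pinned at $1$), and to control the derivative of $P(\Sigma):=\Pr_{\mathcal N(0,\Sigma)}[Z>0]$ by the Plackett--Price identity: for $i\ne j$,
\[
\frac{\partial P}{\partial\Sigma_{ij}}=\phi_{\Sigma_{\{i,j\}}}(0,0)\cdot\Pr\big[Z_\ell>0\ \forall\ell\ne i,j\ \big|\ Z_i=Z_j=0\big],
\]
which follows from the heat-type identity $\partial_{\Sigma_{ij}}\phi_\Sigma=\partial_{z_i}\partial_{z_j}\phi_\Sigma$ and Fubini. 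Along $\Sigma_t$ the density prefactor is $\phi_{\Sigma_{\{i,j\}}}(0,0)=\big(2\pi\sqrt{1-t^2E_{ij}^2}\big)^{-1}=\frac1{2\pi}\big(1+O(1/N)\big)$, and the conditional probability is the orthant probability of the $(k-2)$-dimensional Gaussian $Z$ restricted to the remaining coordinates and conditioned on $Z_i=Z_j=0$, whose covariance (a Schur complement of $\Sigma_t$) is again $I_{k-2}+\widetilde E$ with $\widetilde E$ of zero diagonal and off-diagonal entries $O(1/\sqrt N)$, i.e. still near-identity. Granting the auxiliary bound below that this conditional orthant probability is $O(2^{-(k-2)})$, we get $\big|\frac{\partial P}{\partial\Sigma_{ij}}\big|_{\Sigma_t}\big|=O(2^{-k})$, and therefore
\[
\Big|\Pr_{\Fdi}[C]-2^{-k}\Big|\;\le\;\int_0^1\Big(\sum_{i<j}2|E_{ij}|\Big)\cdot\max_{i\ne j}\Big|\tfrac{\partial P}{\partial\Sigma_{ij}}\Big|_{\Sigma_t}\Big|\,\d t\;=\;O(2^{-k})\sum_{i<j}|E_{ij}|\;=\;O\!\Big(\tfrac{k^2}{\sqrt N}\Big)\,2^{-k}.
\]
Since $\Pr_{\Uni}[C]=2^{-k}$ and $C$ was arbitrary, this is exactly the assertion that $\Fdi$ is $O(k^2/\sqrt N)$-almost $k$-wise independent.

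It remains to prove the auxiliary lemma: \emph{an $m$-dimensional centered Gaussian whose covariance has diagonal $1+O(1/N)$ and off-diagonal entries at most $\rho=O(1/\sqrt N)$ in absolute value has all-positive-orthant probability $O(2^{-m})$, whenever $m\le\sqrt[4]N$.} I would prove this by Slepian's inequality: after rescaling the diagonal to $1$ (which changes off-diagonal entries by a $1+O(1/N)$ factor), replace the covariance by the equicorrelated matrix with every off-diagonal entry equal to $\rho$; this can only increase the orthant probability, and for the equicorrelated Gaussian $Z_\ell=\sqrt\rho\,G+\sqrt{1-\rho}\,W_\ell$ the orthant probability equals the one-dimensional integral $\Ex_{G\sim\mathcal N(0,1)}\big[\Phi\big(\sqrt{\rho/(1-\rho)}\,G\big)^m\big]$, which (splitting at $|G|\approx 1/\sqrt\rho$) is at most $e^{O(\rho m^2)}2^{-m}+e^{-\Omega(1/\rho)}=O(2^{-m})$ because $\rho m^2=O(1)$ when $m\le\sqrt[4]N$. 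One must also check that conditioning on two coordinates perturbs the covariance only by $O(1/N)$ in each entry, so that the conditional covariances arising in the Plackett--Price step really do fall under this lemma.

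The main obstacle is getting a \emph{multiplicative} $O(k^2/\sqrt N)$ error rather than an additive one: a direct coupling $Z=\Sigma^{1/2}W$ together with a union bound over the $k$ coordinates gives only $|\Pr_{\Fdi}[C]-2^{-k}|=O(\poly(k)/\sqrt N)$, which is useless since $2^{-k}$ is exponentially small. The interpolation argument is what forces the $2^{-k}$ out of the estimate, and even there it only works once the conditional orthant probabilities in the Plackett--Price formula are controlled by $O(2^{-(k-2)})$ instead of the trivial bound $1$; proving that control (via Slepian and the equicorrelated computation) and verifying that the near-identity structure is preserved under conditioning is the delicate part of the argument.
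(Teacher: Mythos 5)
You should first note that this paper never proves the statement: it is imported verbatim as Theorem~19 of \cite{aaronson2010bqp} and used as a black box (in Lemma~4.3), so the only proof to compare against is Aaronson's original one, not anything in this note. Your write-up is therefore a self-contained re-proof, and it is essentially correct. The first reduction is the same one any proof must make: a $k$-term is a sign pattern of the centered Gaussian vector $Z=(v_{x_1},\dots,v_{x_a},\widehat v_{y_1},\dots,\widehat v_{y_b})$ with covariance $I_k+E$, where $E$ lives on the $a\times b$ cross block with entries $\pm 1/\sqrt N$, and $\Pr_\Uni[C]=2^{-k}$ is the orthant probability at $E=0$. Where you genuinely diverge from Aaronson is in how the near-identity orthant probability is estimated: you interpolate $\Sigma_t=I_k+tE$ and use Plackett's derivative identity, paying $\sum_{i<j}|E_{ij}|=ab/\sqrt N\le k^2/(4\sqrt N)$ times the size of a $(k-2)$-dimensional conditional orthant probability, which you cap at $O(2^{-(k-2)})$ by Slepian plus the equicorrelated one-dimensional integral. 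This is a clean way to force the $2^{-k}$ out of the error and obtain the multiplicative $O(k^2/\sqrt N)$ bound, and you correctly identify and check the two places where the argument could break (positive definiteness along the path, and the fact that the Schur complement after conditioning on $Z_i=Z_j=0$ is again $I+O(1/\sqrt N)$ off the diagonal, so the auxiliary lemma applies uniformly in $t$ and in the pair $(i,j)$, using $\rho m^2=O(1)$ for $k\le \sqrt[4]N$).

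One local step needs repair, though the lemma it serves is true. In the auxiliary bound you cannot justify the term $e^{O(\rho m^2)}2^{-m}$ by ``splitting at $|G|\approx 1/\sqrt\rho$'': on that range $\Phi(cG)^m$ can be as large as roughly $2^{-m}e^{\Theta(m)}$ (take $|G|$ near $1/\sqrt\rho$, so $cG\approx 1$), so taking a supremum there is useless. The fix is to integrate rather than split: $\Phi(cG)\le \tfrac12\bigl(1+c|G|\bigr)$, hence $\Ex_G\bigl[\Phi(cG)^m\bigr]\le 2^{-m}\,\Ex_G\bigl[e^{cm|G|}\bigr]\le 2^{-m+1}e^{c^2m^2/2}=O(2^{-m})$ since $c^2m^2=O(\rho m^2)=O(1)$; no case analysis is needed, and the stray $e^{-\Omega(1/\rho)}$ term disappears (it would anyway be $\le 2^{-m}$ for $m\le\sqrt[4]N$, so your stated conclusion was right even if the justification was not). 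There is also a harmless constant slip: $\|E\|_F=\sqrt{2ab}/\sqrt N\le k/\sqrt{2N}$, not $k/(2\sqrt N)$; what you actually need is only $\|E\|_{\mathrm{op}}<1$, which does hold (indeed $\|E\|_{\mathrm{op}}=\|B\|_{\mathrm{op}}\le\sqrt{ab}/\sqrt N\le k/(2\sqrt N)$ by the block structure). With these repairs your argument is a complete and somewhat more modular alternative to the original proof in \cite{aaronson2010bqp}.
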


Intuitively, w.h.p., a sample from $\Fdi$ is a 1-input of function $\Fore$ and a sample from $\Uni$ is a $0$-input of $\Fore$. However, the supports of $\Fdi$ and $\Uni$ are not disjoint, which causes some trouble. To fix this problem, we define the following two distributions on $\{-1,1\}^M$.

\begin{defi}
	$\FdiP$ is the conditional distribution obtained by $\Fdi$ conditioned on the event that $\Phi_{f,g} \ge 0.07$, i.e., a sample $\langle f,g \rangle$ from $\FdiP$ can be generated as follows: We draw a sample $\langle f,g \rangle$ from $\Fdi$, if $\Phi_{f,g} \ge 0.07$, we simply output $\langle f,g \rangle$, otherwise we discard $\langle f,g \rangle$ and start again until the requirement is satisfied. 
	
	In the same way, $\UniP$ is the conditional distribution obtained by $\Uni$ conditioned on the event that $|\Phi_{f,g} | \le 0.01$.
\end{defi}

By definition, we can see $\FdiP$ is supported on $1$-inputs to $\Fore$, and $\UniP$ is supported on the $0$-inputs. Furthermore, they are both almost $k$-wise independent.

\begin{lemma}\label{lm:alm-k-indp}
	For any $k = N^{o(1)}$, $\FdiP$ and $\UniP$ are $o(1)$-almost $k$-wise independent.
\end{lemma}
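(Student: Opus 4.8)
The plan is to show that conditioning $\Fdi$ on the event $E_1 = \{\Phi_{f,g} \ge 0.07\}$ (and $\Uni$ on $E_0 = \{|\Phi_{f,g}| \le 0.01\}$) perturbs every $k$-term probability by only a negligible amount, so that almost $k$-wise independence is inherited from the unconditioned distributions. The key quantitative input is that both events are overwhelmingly likely: a sample from $\Fdi$ satisfies $\Phi_{f,g}\ge 0.07$ with probability $1 - o(1)$ (in fact $1 - 2^{-\Omega(N)}$ by standard Gaussian concentration, since $\Phi_{f,g}$ for the forrelated distribution concentrates around a constant bounded away from $0.07$ from above), and a uniform sample satisfies $|\Phi_{f,g}|\le 0.01$ with probability $1-o(1)$ as well (here $\Phi_{f,g}$ is a normalized sum behaving like a mean-zero Gaussian of constant variance, so it exceeds $0.01$ in absolute value only with constant—or, after a Chernoff/anticoncentration estimate, appropriately small—probability; strictly we only need $\Pr[E_0] \ge \Omega(1)$, though a sharper bound gives a cleaner error term). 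I would cite Theorem~19 of \cite{aaronson2010bqp} together with these tail bounds, which are exactly the ``intuitively w.h.p.'' statements alluded to just before the lemma.

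The main computation is elementary conditioning. Fix any $k$-term $C$. For $\FdiP$ we have
\[
\Pr_{\FdiP}[C] = \Pr_{\Fdi}[C \mid E_1] = \frac{\Pr_{\Fdi}[C \wedge E_1]}{\Pr_{\Fdi}[E_1]}.
\]
Bounding $\Pr_{\Fdi}[C \wedge E_1] \le \Pr_{\Fdi}[C]$ and $\Pr_{\Fdi}[C \wedge E_1] \ge \Pr_{\Fdi}[C] - \Pr_{\Fdi}[\neg E_1]$, and using $\Pr_{\Fdi}[E_1] \ge 1 - \delta$ with $\delta = o(1)$, gives
\[
\frac{\Pr_{\Fdi}[C] - \delta}{1} \le \Pr_{\FdiP}[C] \le \frac{\Pr_{\Fdi}[C]}{1-\delta}.
\]
Since $\Pr_{\Uni}[C] = 2^{-k}$ with $k = N^{o(1)}$, we have $2^{-k} = N^{-o(1)} \gg \delta$ when $\delta$ is (super)polynomially small — this is where the strong $2^{-\Omega(N)}$-type bound on $\Pr[\neg E_1]$ matters, rather than a merely $o(1)$ bound. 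Combining with the bound $|\Pr_{\Fdi}[C]/2^{-k} - 1| \le O(k^2/\sqrt N) = o(1)$ from Theorem~19, the additive $\delta$ and the multiplicative $1/(1-\delta)$ both translate into $o(1)$ relative error, so $|\Pr_{\FdiP}[C]/2^{-k} - 1| = o(1)$, which is the claim for $\FdiP$.

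The argument for $\UniP$ is identical in form: $\Pr_{\UniP}[C] = \Pr_{\Uni}[C \mid E_0]$, and $\Pr_{\Uni}[C] = 2^{-k}$ exactly, so here we do not even need Theorem~19 — we only need $\Pr_{\Uni}[E_0] \ge 1 - \delta'$ with $\delta' \ll 2^{-k}$. Thus the whole burden of the lemma is (a) verifying that $\Pr_{\Fdi}[\neg E_1]$ and $\Pr_{\Uni}[\neg E_0]$ are both at most some $N^{-\omega(1)}$ quantity (comfortably beating $2^{-k} = N^{-o(1)}$), and (b) the one-line conditioning inequality above. I expect step (a), specifically pinning down a clean concentration statement for $\Phi_{f,g}$ under $\Uni$ (it is a weighted $\pm 1$ sum, so a Hoeffding bound $\Pr_{\Uni}[|\Phi_{f,g}| > 0.01] \le 2\exp(-\Omega(N))$ works after checking the normalization makes each summand contribute $O(N^{-3/2})$ and there are $N^2$ of them, giving variance $\Theta(1)$ and the stated tail), to be the only place requiring care; the rest is bookkeeping. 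If one wants to avoid re-deriving these tail bounds, they can be quoted from \cite{aaronson2010bqp} where the analogous ``a random forrelated (resp. uniform) instance is a yes- (resp. no-) instance with high probability'' facts are established.
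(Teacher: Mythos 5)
Your proposal follows essentially the same route as the paper: both proofs treat $\FdiP$ and $\UniP$ as the unconditioned distributions $\Fdi$ and $\Uni$ conditioned on overwhelmingly likely events, so every $k$-term probability moves by at most (roughly) the failure probability, and then combine this with Theorem~19 of \cite{aaronson2010bqp}; the paper gets the two tail bounds by quoting Theorem~9 of \cite{aaronson2010bqp} (probability $1-1/\exp(N)$ that a forrelated sample has $\Phi_{f,g}\ge 0.07$) and Lemma~34 of \cite{aaronson2015forrelation} with $t=\sqrt{N}/100$ (probability $1-1/\exp(\sqrt{N})$ that a uniform sample has $|\Phi_{f,g}|\le 0.01$), exactly the fallback you offer.

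Two quantitative slips in your write-up should be fixed, even though they do not sink the argument as you actually run it. First, with $k=N^{o(1)}$ you have $2^{-k}=2^{-N^{o(1)}}$, which is \emph{not} $N^{-o(1)}$ and can be superpolynomially small; consequently a merely $N^{-\omega(1)}$ failure probability does not ``comfortably beat'' $2^{-k}$ in general, and the parenthetical claim that $\Pr[E_0]\ge\Omega(1)$ would suffice is false --- conditioning on a constant-probability event can shift a $k$-term probability by a constant factor, and the additive loss $\Pr[\neg E_0]$ in your lower bound must be $o(2^{-k})$. What saves you is that the bounds you actually invoke are of the form $\exp(-\Omega(\sqrt{N}))$ or $\exp(-\Omega(N))$, and $k=N^{o(1)}=o(\sqrt{N})$, so $2^{k}\cdot\exp(-\Omega(\sqrt{N}))=o(1)$; state the comparison this way. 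Second, your Hoeffding sketch for the uniform tail is not rigorous as written: the $N^{2}$ summands $f(x)(-1)^{x\cdot y}g(y)/N^{3/2}$ are not independent (there are only $2N$ independent bits), and $\Phi_{f,g}$ has variance $\Theta(1/N)$, not $\Theta(1)$. The clean derivation conditions on $g$, writes $\Phi_{f,g}=N^{-1/2}\sum_x f(x)\hat{g}(x)$ with $\sum_x \hat{g}(x)^2=1$ by Parseval, and applies Hoeffding over the independent signs $f(x)$, giving $\exp(-\Omega(N))$; alternatively simply cite Lemma~34 of \cite{aaronson2015forrelation} as the paper does.
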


In order to prove the above lemma, we need two concentration results about $\Phi_{f,g}$ on the two distributions $\Fdi$ and $\Uni$.

	\begin{lemma}[Part of Theorem~9 in \cite{aaronson2010bqp}]\label{lm:con-for}
		With $1- 1/\exp(N)$ probability over $\langle f,g \rangle$ drawn from $\Fdi$, we have $\Phi_{f,g} \ge 0.07$.
	\end{lemma}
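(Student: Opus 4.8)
The plan is to reduce the statement to a one-sided concentration bound: I will show that $\Ex_{\Fdi}[\Phi_{f,g}]$ is a constant comfortably above $0.07$, and then that $\Phi_{f,g}$ concentrates around this constant with failure probability $1/\exp(N)$. (In fact any bound of the form $\exp(-N^{\Omega(1)})$ already suffices for every later use of this lemma, so I will be content to sketch that weaker---but still superpolynomially strong---estimate and only indicate how to recover the stated bound.) The one structural fact used everywhere is that $\hat v = Hv$ for the symmetric orthogonal matrix $H_{x,y}=(-1)^{x\cdot y}/\sqrt N$: hence $\hat v$ is again a vector of i.i.d.\ standard Gaussians, and each pair $(v_x,\hat v_y)$ is jointly Gaussian with unit variances and covariance $(-1)^{x\cdot y}/\sqrt N$.

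\emph{Step 1: the mean.} This part is routine. By Sheppard's arcsine identity, $\Ex[\operatorname{sgn}(v_x)\operatorname{sgn}(\hat v_y)] = \tfrac2\pi\arcsin\!\big((-1)^{x\cdot y}/\sqrt N\big)$, and since $\arcsin$ is odd we have $(-1)^{x\cdot y}\arcsin\!\big((-1)^{x\cdot y}/\sqrt N\big)=\arcsin(1/\sqrt N)$ for every $x,y$. Summing over the $N^2$ pairs, $\Ex_{\Fdi}[\Phi_{f,g}]=\frac1{N^{3/2}}\sum_{x,y}(-1)^{x\cdot y}\cdot\frac2\pi\arcsin\!\big((-1)^{x\cdot y}/\sqrt N\big)=\frac{2\sqrt N}{\pi}\arcsin(1/\sqrt N)=\frac2\pi+O(1/N)$, which is larger than $0.63$ once $n$ is large.

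\emph{Step 2: concentration.} Here lies the work. I would regard $\Phi:=\Phi_{f,g}$ as a function of the i.i.d.\ Gaussian coordinates $v_1,\dots,v_N$ and study the Doob martingale $D_i:=\Ex[\Phi\mid v_1,\dots,v_i]$. First fix a ``good event'' $\mathcal G$ of probability $1-\exp(-N^{\Omega(1)})$ on which: every Fourier coefficient $\hat f_y=\frac1{\sqrt N}\sum_x(-1)^{x\cdot y}f(x)$ and $\hat g_x$ has magnitude at most $N^{o(1)}$ (true because $f=\operatorname{sgn}(v)$ and $g=\operatorname{sgn}(\hat v)$ are each uniformly random strings in $\{-1,1\}^N$, so these are normalized Rademacher sums); $\max_x|v_x|\le N^{o(1)}$; and at most $N^{1/2+o(1)}$ of the $\hat v_y$ lie in the window $[-N^{o(1)}/\sqrt N,\,N^{o(1)}/\sqrt N]$ (a Chernoff bound on a sum of independent indicators). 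On $\mathcal G$, resampling one coordinate $v_i$ affects $\Phi$ in only two ways: it flips the single bit $f(i)$, changing $\Phi$ by at most $\tfrac2N|\hat g_i|=N^{-1+o(1)}$; and it flips those bits $g(y)$ with $|\hat v_y|\le|v_i'-v_i|/\sqrt N$, of which there are at most $N^{1/2+o(1)}$, each flip changing $\Phi$ by $\pm\tfrac2N|\hat f_y|=\pm N^{-1+o(1)}$. Crucially these flip contributions enter with the signs $(-1)^{i\cdot y}\operatorname{sgn}(v_i'-v_i)$, so their sum has square-root cancellation and is typically of size $N^{-3/4+o(1)}$. Consequently the martingale increments $D_i-D_{i-1}$ are bounded by $N^{-1/2+o(1)}$ and have conditional variance $N^{-3/2+o(1)}$; the predictable quadratic variation over the $N$ steps is $N^{-1/2+o(1)}$, and Freedman's Bernstein-type martingale inequality gives $\Pr[\Phi<\Ex[\Phi]-0.5]\le\exp(-N^{\Omega(1)})$. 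Together with Step~1 this yields $\Pr_{\Fdi}[\Phi_{f,g}\ge0.07]\ge1-\exp(-N^{\Omega(1)})$; sharpening the cancellation estimate to an $\exp(-\Omega(N))$ tail (as in Theorem~9 of \cite{aaronson2010bqp}) recovers the stated $1/\exp(N)$ bound.

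\emph{The main obstacle.} The delicate point is Step~2: a single Gaussian coordinate genuinely flips $\Theta(\sqrt N)$ of the Fourier-side signs $g(y)$, so the worst-case change of $\Phi$ under one coordinate is only $\Theta(1/\sqrt N)$, and a plain bounded-differences (Azuma) argument proves nothing, since $N\cdot(1/\sqrt N)^2=\Theta(1)$. Everything rests on trading the worst case for a second-moment bound: each of the $\Theta(\sqrt N)$ flips costs only $\Theta(1/N)$, and they arrive with pseudorandom signs, so a Bernstein/Freedman-type inequality does apply. Making this cancellation rigorous---i.e.\ bounding $\big|\sum_{y\in S}(-1)^{i\cdot y}\hat f_y\big|$, where $S$ is the (random, and correlated with $\hat f$ through $v$) set of flipped indices---is the technical heart of the proof.
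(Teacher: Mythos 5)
You should first note that the paper does not actually prove this lemma: it is imported verbatim as part of Theorem~9 of \cite{aaronson2010bqp}, so there is no in-paper argument to match, and your attempt is really a reconstruction of Aaronson's proof. Your Step~1 is correct and is the standard computation (orthogonality of the Fourier matrix makes $(v_x,\hat v_y)$ bivariate normal with correlation $(-1)^{x\cdot y}/\sqrt N$, and Sheppard's formula gives $\Ex[\Phi_{f,g}]=\frac{2\sqrt N}{\pi}\arcsin(1/\sqrt N)\approx 2/\pi$), so the content of the lemma is entirely in the concentration step.

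That step is where your proposal has a genuine gap, and you say so yourself: the claim that the flip contributions $\sum_{y\in S}(-1)^{i\cdot y}\hat f_y$ exhibit square-root cancellation is asserted, not proven, and it is not innocuous --- $S$ is determined by $\hat v$ (hence by all of $v$) together with the resampled coordinate, while $\hat f$ is determined by $\mathrm{sgn}(v)$, so the signs you hope cancel are correlated with the set over which you sum; without controlling this correlation the conditional-variance bound $N^{-3/2+o(1)}$, and hence the whole Freedman argument, is unsupported. Two further quantitative problems: (i) the ``good event'' as stated does not have probability $1-\exp(-N^{\Omega(1)})$ --- with thresholds $N^{o(1)}$ the bounds $\max_y|\hat f_y|\le N^{o(1)}$ and $\max_x|v_x|\le N^{o(1)}$ fail with only quasi-polynomially small probability (you would need thresholds $N^{\varepsilon}$, which changes the increment bounds); and (ii) restricting a Doob martingale ``on $\mathcal G$'' is not legitimate as written, since the increments are conditional expectations that average over the bad event --- you need a bounded-differences-with-exceptions argument (a truncation or a modified martingale) rather than a bare appeal to Freedman. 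Finally, even granting all of this, the argument yields a tail of the form $\exp(-N^{\Omega(1)})$, not the stated $1/\exp(N)$; you are right that the weaker bound suffices for every later use in this paper (Lemma~\ref{lm:alm-k-indp} only divides it by $2^{-k}$ with $k=N^{o(1)}$), but invoking ``as in Theorem~9 of \cite{aaronson2010bqp}'' to recover the full strength is circular, since that theorem is exactly what is being proved. As it stands the proposal is a plausible plan with its technical heart missing; the honest alternatives are either to carry out the cancellation estimate rigorously or to do what the paper does and cite Theorem~9 of \cite{aaronson2010bqp} directly.
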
 
	
	\begin{lemma}[Lemma~34 in \cite{aaronson2015forrelation}, simplified]\label{lm:con-uni}
		Suppose $f,g: \{0,1\}^n \rightarrow\left\{  -1,1\right\}  $\ are chosen uniformly at random.
		\ Then%
		\[
		\Pr_{f,g}\left[  \left\vert \Phi_{f,g%
		}\right\vert \geq\frac{t}{\sqrt{N}}\right]  =O\left(  \frac{1}{t^{t}}\right)
		.
		\]
	\end{lemma}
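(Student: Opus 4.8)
The plan is to recognize $\Phi_{f,g}$ as a rescaling of a weighted sum of independent signs and apply a standard Hoeffding/Chernoff tail bound. First I would rewrite the quantity: since $2^{3n/2}=N^{3/2}$, grouping the double sum by $y$ gives
\[
\Phi_{f,g} \;=\; \frac{1}{N^{3/2}}\sum_{x,y} f(x)(-1)^{x\cdot y}g(y) \;=\; \frac{1}{N}\sum_{y\in\{0,1\}^n} g(y)\,\widehat{f}(y),
\]
where $\widehat f(y):=\frac{1}{\sqrt N}\sum_{x}(-1)^{x\cdot y}f(x)$ is the normalized Fourier transform of $f$ over $\mathbb{Z}_2^n$ (same normalization as in the definition of $\Fdi$). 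The one structural input I need is Parseval's identity, $\sum_y \widehat f(y)^2 = \sum_x f(x)^2 = N$, which holds for \emph{every} $f:\{0,1\}^n\to\{-1,1\}$ (because $f(x)^2=1$), not merely for typical $f$.

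Next I would condition on $f$, so that $\widehat f$ is a fixed vector with $\|\widehat f\|_2^2 = N$, and treat the $N$ values $g(y)\in\{-1,1\}$ as independent uniform signs. Then $\sum_y g(y)\widehat f(y)$ is a sum of independent mean-zero random variables, the $y$-th of which takes values in $[-|\widehat f(y)|,|\widehat f(y)|]$, so Hoeffding's inequality (equivalently the sub-Gaussian bound for Rademacher sums) gives
\[
\Pr_{g}\!\left[\;\Big|\sum_y g(y)\widehat f(y)\Big| \ge \lambda \;\right] \;\le\; 2\exp\!\left(-\frac{\lambda^2}{2\sum_y \widehat f(y)^2}\right) \;=\; 2\exp\!\left(-\frac{\lambda^2}{2N}\right).
\]
Since $|\Phi_{f,g}|\ge t/\sqrt N$ is exactly the event $\big|\sum_y g(y)\widehat f(y)\big| \ge t\sqrt N$, taking $\lambda = t\sqrt N$ yields $\Pr_g[\,|\Phi_{f,g}|\ge t/\sqrt N\,]\le 2\exp(-t^2/2)$. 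This bound does not depend on $f$, so it survives averaging over the (uniformly random) choice of $f$, giving the same estimate for $\Pr_{f,g}$.

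It remains to convert $2\exp(-t^2/2)$ into the stated $O(1/t^t)$. This follows from the elementary inequality $t/2\ge\ln t$, valid for all $t>0$ (the function $t/2-\ln t$ is convex with global minimum $1-\ln 2>0$ at $t=2$); multiplying by $t$ and exponentiating gives $\exp(-t^2/2)\le \exp(-t\ln t)=t^{-t}$, hence $\Pr_{f,g}[\,|\Phi_{f,g}|\ge t/\sqrt N\,]\le 2/t^{t}=O(1/t^{t})$.

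I do not expect a genuine obstacle here: the whole argument is a one-line reduction followed by Hoeffding. The only two points needing attention are keeping the Fourier normalization straight so that Parseval produces exactly $\|\widehat f\|_2^2 = N$, and observing that the Hoeffding estimate is uniform in $f$ — which is precisely what justifies conditioning on $f$ first and then averaging it out at no cost. (A moment-based alternative, estimating $\Ex_{f,g}[\Phi_{f,g}^{2k}]$ by counting pairs of tuples $(x_1,\dots,x_{2k}),(y_1,\dots,y_{2k})$ in which every value repeats with even multiplicity and then applying Markov, would also produce a $t^{-\Theta(t)}$ tail, but it requires noticeably more calculation, so I would stick with the Hoeffding approach.)
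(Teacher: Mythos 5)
Your proof is correct. Note, however, that the paper itself does not prove this lemma at all: it is imported verbatim (in simplified form) as Lemma~34 of Aaronson--Ambainis \cite{aaronson2015forrelation}, so the paper's ``proof'' is a citation, whereas you give a self-contained argument. Your route --- rewrite $\Phi_{f,g}=\frac{1}{N}\sum_y g(y)\widehat{f}(y)$, use Parseval to get $\sum_y \widehat{f}(y)^2=N$ \emph{deterministically} for every Boolean $f$, then condition on $f$ and apply Hoeffding to the Rademacher sum in $g$ --- is sound: the conditional tail bound $2\exp(-\lambda^2/(2N))$ is uniform in $f$, so averaging over $f$ costs nothing, and with $\lambda=t\sqrt{N}$ you get $2\exp(-t^2/2)$, which is in fact \emph{stronger} than the stated $O(1/t^t)$ (your conversion via $t/2\ge\ln t$ for all $t>0$ is fine, since the minimum of $t/2-\ln t$ is $1-\ln 2>0$). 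The key observation that makes the conditioning painless is exactly the one you flag: Parseval fixes $\|\widehat{f}\|_2^2=N$ for every sign vector $f$, so no concentration over $f$ is needed. For the purposes of this paper the bound is only used with $t=\sqrt{N}/100$ to get a $1/\exp(\sqrt{N})$ failure probability in Lemma~\ref{lm:alm-k-indp}, and your $2\exp(-t^2/2)$ bound serves that purpose at least as well (indeed better, giving $\exp(-\Omega(N))$); so your argument could replace the citation entirely, at the modest cost of a short Parseval-plus-Hoeffding computation.
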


\begin{proof}
	
	By Lemma~\ref{lm:con-for} and the definition of distribution $\FdiP$, for any $k=N^{o(1)}$-term $C$, we have $$\left|\Pr_{\FdiP}[C] - \Pr_{\Fdi}[C]\right| \le 1/\exp(N).$$ Therefore
	
	$$
	\frac{\Pr_{\FdiP}[C]}{\Pr_{ \Uni}[C]} \ge 
	\frac{\Pr_{ \Fdi}[C]}{\Pr_{ \Uni}[C]} - \frac{1/\exp(N)}{\Pr_{ \Uni}[C]} \ge 1 - o(1),
	$$
	
	and similarly,
	$$
	\frac{\Pr_{ \FdiP}[C]}{\Pr_{ \Uni}[C]} \le 
	\frac{\Pr_{ \Fdi}[C]}{\Pr_{ \Uni}[C]} + \frac{1/\exp(N)}{\Pr_{ \Uni}[C]} \le 1 + o(1). 
	$$
	
	Letting $t= \sqrt{N}/100$ and applying Lemma~\ref{lm:con-uni}, we have $\Pr_{f,g}\left[  \left\vert \Phi_{f,g%
	}\right\vert \geq 0.01\right] \le 1/\exp(\sqrt{N})$. So with $1-1/\exp(\sqrt{N})$ probability over $\langle f,g \rangle$ drawn from $\Uni$, we have $|\Psi_{f,g}| \le 0.01$. Then by the definition of $\UniP$, for any $k=N^{o(1)}$-term $C$, we have $\left| \Pr_{ \UniP}[C] - \Pr_{ \Uni}[C] \right| \le 1/\exp(\sqrt{N})$. The claim now follows in the same way as for $\FdiP$.
	
\end{proof}

\subsection{\PCollision}
We can think of the input $f: [N] \to [N]$ as a string $X = x_1,x_2,\dotsc,x_{N}$ in $[N]^N$. Since we are interested in Boolean inputs, we can easily encode such an $X$ as an $m$-bit string where $m=n\cdot N$. Slightly abusing notation, we will speak interchangeably about $X$ as an element in $\{0,1\}^m$ or $[N]^N$.

\begin{defi}
	Let $\Dperm^n$ be the uniform distribution over all permutations on $[N]$, and $\Dtwoone$ be the uniform distribution over all $2$-to-$1$ functions from $[N] \to [N]$. Both can be easily interpreted as distributions over $\{0,1\}^m$. We will also use $\Dperm^n$ and $\Dtwoone^n$ to denote the corresponding distributions over $\{0,1\}^m$ for convenience.
\end{defi}

We have the following important lemma.

\begin{lemma}\label{lm:col-distr}
	For any $k = N^{o(1)}$, $\Dtwoone$ $o(1)$-almost $k$-wise dominates $\Dperm$.
\end{lemma}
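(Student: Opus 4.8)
The plan is to compare, for a fixed $k$-term $C$, the probability that a uniformly random $2$-to-$1$ function satisfies $C$ with the probability that a uniformly random permutation does, and show the ratio is at least $1-o(1)$. Recall a $k$-term $C$ fixes the values of $k$ specific bits of the encoding $X \in \{0,1\}^m$; since each block of $n$ bits encodes one value $x_i \in [N]$, the term $C$ imposes a constraint that is slightly awkward because it can fix partial blocks. The cleanest route is first to reduce to the case where $C$ fixes \emph{full blocks}: a general $k$-term $C$ is implied by (is a disjunction-free coarsening of, i.e. is a sum over) a family of ``completed'' terms $C'$ that additionally pin down the remaining bits of each partially-constrained block, and $\Pr_{\distr}[C] = \sum_{C'} \Pr_{\distr}[C']$ over this family for either distribution; so it suffices to prove the domination inequality for terms that fix the images $f(a_1) = b_1, \dots, f(a_j) = b_j$ for some $j \le k$ distinct coordinates $a_1,\dots,a_j$ and some values $b_1,\dots,b_j$ (not necessarily distinct).

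\smallskip

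Next I would compute both probabilities exactly by counting. For the permutation distribution, $\Pr_{\Dperm}[f(a_1)=b_1,\dots,f(a_j)=b_j]$ is $0$ if the $b_i$ are not all distinct, and otherwise equals $\frac{(N-j)!}{N!} = \frac{1}{N(N-1)\cdots(N-j+1)}$. For the $2$-to-$1$ distribution, the number of $2$-to-$1$ functions $[N]\to[N]$ is $\binom{N}{N/2}\cdot \frac{N!}{2^{N/2}}$ (choose the image set of size $N/2$, then assign the $N$ domain points to form the pairing), and one counts those consistent with the constraints by a similar but slightly more involved argument, splitting on whether the prescribed $b_i$ are distinct or have collisions. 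The key point is that when the $b_i$ are distinct the $2$-to-$1$ count gives a probability that is $\frac{1}{N(N-1)\cdots(N-j+1)}\cdot(1\pm o(1))$ for $j = N^{o(1)}$, matching the permutation probability up to a $1-o(1)$ factor; and when the $b_i$ are \emph{not} all distinct, $\Pr_{\Dperm}[C']=0$, so the ratio $\Pr_{\Dtwoone}[C']/\Pr_{\Dperm}[C']$ is ``$\ge 1-\varepsilon$'' vacuously (in the convention of the paper where the denominator is $\Pr_{\Dperm}$; here one must be a little careful, since the statement is $\Pr_{\Dtwoone}/\Pr_{\Dperm}\ge 1-\varepsilon$ and division by zero needs the convention that this holds whenever $\Pr_{\Dperm}=0$, which matches ``domination''). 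After summing back over the completion family $C'$, the terms with repeated $b_i$ only \emph{add} to $\Pr_{\Dtwoone}[C]$, so they can only help the lower bound; hence $\Pr_{\Dtwoone}[C]/\Pr_{\Dperm}[C] \ge 1 - o(1)$.

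\smallskip

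To make the ratio estimate precise, the main quantitative step is to show that the probability, under $\Dtwoone$, of the event ``$f(a_1),\dots,f(a_j)$ are $j$ distinct values and moreover no two of $a_1,\dots,a_j$ are paired with each other'' is $(1-o(1))\cdot\frac{(N-j)!}{N!}$ for $j\le k = N^{o(1)}$; conditioned on that structural event the values behave like a uniformly random injection, which gives exactly $\frac{(N-j)!}{N!}$. The $o(1)$ loss comes from (i) the probability that some pair among the $a_i$'s collides, which is $O(j^2/N) = o(1)$, and (ii) lower-order corrections in the ratio of falling factorials coming from the $N/2$ (rather than $N$) constraint on image-set size. I expect the bookkeeping in this counting — getting the $2$-to-$1$ function count conditioned on $j$ prescribed, possibly-colliding image values, and bounding it cleanly against the permutation count — to be the main obstacle; everything else (the reduction to full blocks, and the observation that colliding-$b_i$ completions only help) is routine. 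It may be cleaner to avoid exact formulas entirely and instead argue directly: sample a $2$-to-$1$ function by choosing a uniformly random perfect matching on $[N]$ and then a uniformly random injection from the $N/2$ pairs into $[N]$; condition on the matching separating $a_1,\dots,a_j$ (probability $1-O(j^2/N)$), and then the induced marginal on $(f(a_1),\dots,f(a_j))$ is that of $j$ distinct coordinates sampled from a random injection of $N/2$ items into $N$ slots, whose $k$-term probabilities are easily compared to $\Dperm$'s.
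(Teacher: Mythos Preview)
Your proposal is correct and follows essentially the same route as the paper: reduce from arbitrary Boolean $k$-terms to ``proper'' terms that pin down entire blocks $x_{i_1}=y_1,\dots,x_{i_j}=y_j$ (the paper packages this reduction as a one-line claim, exactly your $\Pr[C]=\sum_{C'}\Pr[C']$ observation), dispose of the case where the $y_i$'s collide since then $\Pr_{\Dperm}[C']=0$, and in the distinct case compare the two probabilities directly. The paper carries out the distinct case by an explicit count of $2$-to-$1$ functions consistent with $C$, arriving at the closed form
\[
\Pr_{\Dtwoone}[C]=\prod_{i=0}^{j-1}\frac{1}{N-i}\cdot\prod_{i=0}^{j-1}\frac{N-2i}{N-i}\ \ge\ (1-o(1))\,\Pr_{\Dperm}[C].
\]
Your alternative---sample a $2$-to-$1$ function as a uniform perfect matching followed by a uniform injection of the $N/2$ pairs into $[N]$, condition on the matching separating $a_1,\dots,a_j$ (probability $\ge 1-\binom{j}{2}/(N-1)=1-o(1)$), and observe that conditionally $(f(a_1),\dots,f(a_j))$ has exactly the permutation marginal $\frac{(N-j)!}{N!}$---is a genuinely cleaner shortcut that bypasses the paper's factorial bookkeeping while yielding the same $1-O(j^2/N)$ bound. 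Either way the argument goes through; the only thing you flagged as an ``obstacle'' (the exact $2$-to-$1$ count) is in fact routine, and your sampling description already sidesteps it.
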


In order to prove the above lemma, we need the following technical claim, which shows that almost $k$-wise dominance behaves well with respect to restrictions. Given a $k$-term $C$, let $V(C)$ be the set of variables occurring in $C$. In addition, given a set $S$ of variables such that $V(C) \subseteq S$, let $U_S(C)$ be the set of all $2^{|S|-k}$ terms $B$ such that $V(B) = S$ and $B \implies C$.

\begin{claim}\label{clm:fool-cond}
	Given a $k$-term $C$ and a set $S$ containing $V(C)$, suppose that for every term $B \in U_S(C)$ we have
	$$
	\Pr_{\distr}[B]/\Pr_{\mathcal{U}}[B] \ge 1-\epsilon.
	$$
	Then
	$$
	\Pr_{\distr}[C]/\Pr_{\mathcal{U}}[C] \ge 1-\epsilon.
	$$
\end{claim}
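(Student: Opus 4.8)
The plan is to express each of $\Pr_{\distr}[C]$ and $\Pr_{\mathcal{U}}[C]$ as a sum over the refinements in $U_S(C)$ and then compare termwise. The key observation is that the $2^{|S|-k}$ terms $B \in U_S(C)$ partition the event ``$C$ holds'' into disjoint sub-events: a string $Z$ satisfies $C$ if and only if it satisfies exactly one $B \in U_S(C)$ (namely the one that prescribes the values $Z$ actually takes on the coordinates in $S \setminus V(C)$, together with the prescribed values on $V(C)$). Hence $\Pr_{\distr}[C] = \sum_{B \in U_S(C)} \Pr_{\distr}[B]$ and likewise $\Pr_{\mathcal{U}}[C] = \sum_{B \in U_S(C)} \Pr_{\mathcal{U}}[B]$.

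First I would verify this disjoint-union decomposition carefully, since it is the only structural fact needed. Given that $V(C) \subseteq S$, the terms in $U_S(C)$ are exactly obtained by extending $C$'s prescription to all of $S$; each of the $2^{|S| - k}$ choices of values on $S \setminus V(C)$ gives one such $B$, these are mutually exclusive as events, and their union is precisely the event that $C$ evaluates to $1$. Once this is in place, the hypothesis gives $\Pr_{\distr}[B] \ge (1 - \epsilon)\Pr_{\mathcal{U}}[B]$ for every $B \in U_S(C)$, so summing over all $B \in U_S(C)$ yields
\[
\Pr_{\distr}[C] = \sum_{B \in U_S(C)} \Pr_{\distr}[B] \ge (1-\epsilon) \sum_{B \in U_S(C)} \Pr_{\mathcal{U}}[B] = (1-\epsilon)\Pr_{\mathcal{U}}[C],
\]
and dividing by $\Pr_{\mathcal{U}}[C] = 2^{-k} > 0$ gives the claim.

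There is no real obstacle here — the statement is essentially a bookkeeping lemma — but the one point that requires a moment's care is making sure the decomposition into $U_S(C)$ is genuinely a partition (disjointness and completeness of the cover), rather than just a cover; the inequality direction we want is preserved under summation, so once the partition is established the rest is immediate. I would also note in passing that the analogous two-sided statement (with $1 - \epsilon \le \Pr_{\distr}[B]/\Pr_{\mathcal{U}}[B] \le 1 + \epsilon$) follows by the identical argument, which is what gets used for almost $k$-wise equivalence.
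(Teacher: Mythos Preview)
Your proposal is correct and matches the paper's own proof essentially verbatim: the paper simply notes that $\Pr_{\distr'}[C] = \sum_{B \in U_S(C)} \Pr_{\distr'}[B]$ for any distribution $\distr'$ and says the claim follows directly. Your write-up just spells out more carefully why the $B$'s form a partition of the event $C$, which is fine but not strictly necessary at this level.
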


\begin{proof}
	It is easy to see that for any distribution $\distr'$, $\Pr_{\distr'}[C] = \sum_{B \in U_S(C)} \Pr_{\distr'}[B]$, and the claim follows directly.
\end{proof}

Given an input $X = x_1,x_2,\dotsc,x_N$, let $\Delta(x_i,y)$ denote the $n$-term that evaluates to $1$ if and only if $x_i = y$. We say a term $C$ is a proper $k$-term, if it is a product of the form $\Delta(x_{i_1},y_1) \cdots \Delta(x_{i_k},y_k)$, where $1 \le i_1 < \cdots i_k \le N$ and $y_1,\dotsc,y_k \in [N]$. 

We now prove Lemma~\ref{lm:col-distr}.

\begin{proofof}{Lemma~\ref{lm:col-distr}}
	 Note that a Boolean $k$-term can involve bits occurring in at most $k$ different $x_i$'s. So by Claim~\ref{clm:fool-cond}, to show that any Boolean $k$-term $C'$ satisfies $\Pr_{\Dtwoone^n}[C'] \ge (1-o(1)) \cdot \Pr_{\Dperm^n}[C']$, it suffices to show that any {\em proper} $k$-term
	 $$
	 C = \Delta(x_{i_1},y_1) \cdots \Delta(x_{i_k},y_k)
	 $$
	 satisfies $\Pr_{\Dtwoone^n}[C] \ge (1-o(1)) \cdot \Pr_{\Dperm^n}[C]$.
	 
	 If there exist two distinct $a \ne b$ such that $y_a = y_b$, then we immediately have $\Pr_{\Dperm^n}[C] = 0$, and the statement becomes trivial. So we can assume that all $y_i$'s are distinct.
	 
	 Then by a direct calculation, we have
	 
	 $$
	 \Pr_{\Dperm^n}[C] = \prod_{i=0}^{k-1} \frac{1}{N-i}.
	 $$
	 
	 By first fixing the function's image, it is easy to show there are 
	 $$
	 \binom{N}{N/2} \cdot \frac{N!}{2^{N/2}}
	 $$
	 $2$-to-$1$ functions in total.
	 
	 Then we compute how many $2$-to-$1$ functions are compatible with the term $C$. We first fix the image; there are $\binom{N-k}{N/2-k}$ possibilities. Then we pick $k$ other $x_i$'s such that they take values in $\{y_1,y_2,\dotsc,y_k\}$ and assign values to them; there are $\binom{N-k}{k} \cdot k!$ possibilities. Finally, we assign value to the other $N-2k$ $x_i$'s; there are $\frac{(N-2k)!}{2^{N/2-k}}$ possibilities. Putting everything together, there are 
	 
	 $$
	  \binom{N-k}{N/2-k} \cdot \binom{N-k}{k}\cdot k!\cdot \frac{(N-2k)!}{2^{N/2-k}}
	 $$
	 $2$-to-$1$ functions compatible with $C$. Since $\Dtwoone$ is the uniform distribution over all $2$-to-$1$ functions, we have
	 \begin{align*}
	 \Pr_{\Dtwoone^n}[C] &= \left\{ \binom{N-k}{N/2-k} \cdot \binom{N-k}{k}\cdot k!\cdot \frac{(N-2k)!}{2^{N/2-k}} \right\} \Big/ \left\{ \binom{N}{N/2} \cdot \frac{N!}{2^{N/2}} \right\}\\
	  &= \left\{ \frac{(N-k)!}{(N/2)!\cdot(N/2-k)!} \cdot \frac{(N-k)!}{k!\cdot(N-2k)!}\cdot k!\cdot \frac{(N-2k)!}{2^{N/2-k}} \right\} \Big/ \left\{ \frac{N!}{(N/2)!\cdot(N/2)!} \cdot \frac{N!}{2^{N/2}} \right\}\\
	  &= \left\{ \frac{(N-k)!\cdot(N-k)!}{(N/2)!\cdot (N/2-k)!\cdot 2^{N/2-k}} \right\} \Big/ \left\{ \frac{N!\cdot N!}{(N/2)!\cdot(N/2)!\cdot 2^{N/2}}\right\}\\
	  &= \frac{(N-k)!\cdot(N-k)!\cdot (N/2)!\cdot 2^k}{N!\cdot N!\cdot (N/2-k)!}\\
	 &= \prod_{i=0}^{k-1} \frac{1}{N-i} \cdot \prod_{i=0}^{k-1} \frac{N-2\cdot i}{N-i}\\ 
	 &\ge \Pr_{\Dperm^n}[C] \cdot (1 - \sum_{i=0}^{k-1} \frac{i}{N-i})
	 \ge \Pr_{\Dperm^n}[C] \cdot (1 - o(1)). \tag{$k \le N^{o(1)}$}\\
	 \end{align*}
	 
	 This completes the proof.
	 
\end{proofof}
	
	\section{Oracle Separations from $\BPPPATH$}




We first establish a sufficient condition for showing a function is hard for $\BPPPATH$ algorithms.

\begin{theo}\label{theo:hard-BPPPATH}
	Fix a partial function $f : D \to \{0,1\}$ with $D \subset \{0,1\}^M$. Suppose there are two distributions $\distr_0$ and $\distr_1$ supported on $0$-inputs and $1$-inputs respectively, such that they are $o(1)$-almost $k$-wise equivalent. Then there are no $\BPPPATH$ algorithms can compute $f$ using at most $k$ queries.
\end{theo}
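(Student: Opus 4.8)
The plan is to analyze $M$ one random string at a time — where it collapses to a shallow decision tree — and then play the almost $k$-wise equivalence of $\distr_0,\distr_1$ against the rigid input/output structure that correctness of a $\BPPPATH$ machine forces. The key conceptual point is the one flagged in Section~2: one cannot control the acceptance probability directly, because on a random input it is $\Ex_x[a_x/s_x]$ and \emph{not} $\Ex_x[a_x]/\Ex_x[s_x]$. What saves the argument is that $\distr_0$ and $\distr_1$ live \emph{entirely} on $0$-inputs and $1$-inputs, which turns the postselected acceptance test into linear inequalities in $a_x$ and $s_x$ separately, and $\Ex[a_x],\Ex[s_x]$ are exactly the quantities almost $k$-wise equivalence controls.

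In detail, assume toward a contradiction that a $\BPPPATH$ machine $M$ computes $f$ with at most $k$ queries. Fix a random string $r$. When run with randomness $r$, $M$ makes at most $k$ (adaptive) queries, so its computation is a decision tree of depth at most $k$; padding each root-to-leaf path with dummy queries up to length exactly $k$ (which changes no output), the leaves partition $\{0,1\}^M$ into subcubes that are exactly $k$-terms. Let $S_r$ (resp. $A_r \subseteq S_r$) be the union of those subcubes whose leaf is labeled ``succeed'' (resp. ``succeed and accept''), and set $s_x := \Pr_r[x \in S_r]$, $a_x := \Pr_r[x \in A_r]$. Then $s_x > 0$ (property (i) of the definition) and $\Pr[M(x)\text{ accepts}\mid M(x)\text{ succeeds}] = a_x/s_x$, so correctness of $M$ on $f$ gives, for all $x$ in the domain of $f$,
\[
x \in f^{-1}(1) \implies a_x \ge \tfrac23\, s_x,
\qquad\qquad
x \in f^{-1}(0) \implies a_x \le \tfrac13\, s_x .
\]

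Next I would transport $\Ex[s_x]$ and $\Ex[a_x]$ between the two distributions. For fixed $r$, write $S_r$ as a disjoint union of $k$-terms $\bigsqcup_j C_j$. Since $\distr_1$ is $o(1)$-almost $k$-wise equivalent to $\distr_0$, we have $\Pr_{\distr_1}[C_j] = (1\pm o(1))\Pr_{\distr_0}[C_j]$ for every $j$ with a uniform $o(1)$; summing over the disjoint $C_j$ (the multiplicative error does not accumulate) gives $\Pr_{\distr_1}[x\in S_r] = (1\pm o(1))\Pr_{\distr_0}[x\in S_r]$, and the same for $A_r$. Averaging over $r$,
\[
\Ex_{\distr_1}[s_x] = (1\pm o(1))\,\Ex_{\distr_0}[s_x],
\qquad\qquad
\Ex_{\distr_1}[a_x] = (1\pm o(1))\,\Ex_{\distr_0}[a_x].
\]
On the other hand, $\distr_1$ is supported on $f^{-1}(1)$ and $\distr_0$ on $f^{-1}(0)$, so integrating the two implications above against $\distr_1$ and $\distr_0$ respectively yields $\Ex_{\distr_1}[a_x] \ge \tfrac23\,\Ex_{\distr_1}[s_x]$ and $\Ex_{\distr_0}[a_x] \le \tfrac13\,\Ex_{\distr_0}[s_x]$. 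Chaining these with the previous display, and using $\Ex_{\distr_0}[s_x] > 0$ (since $s_x>0$ pointwise),
\[
\tfrac23(1-o(1)) \;\le\; \frac{\Ex_{\distr_1}[a_x]}{\Ex_{\distr_0}[s_x]} \;\le\; \tfrac13(1+o(1)),
\]
i.e. $2-o(1) \le 1+o(1)$, which is false for all sufficiently large $n$ — the desired contradiction.

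I expect no genuine obstacle here beyond the conceptual one already isolated: the whole difficulty is in not trying to bound acceptance probabilities (which fails, as in \cite{aaronson2010bqp}) and instead tracking $a_x$ and $s_x$ separately while exploiting the ``perfect'' support structure of $\distr_0,\distr_1$. The remaining items are routine: that padding the decision tree genuinely produces honest $k$-terms (equivalently, one could invoke the argument in the proof of Claim~\ref{clm:fool-cond} to reduce lower-codimension subcubes to $k$-terms), and that the per-$r$ multiplicative estimates survive averaging over $r$ because the $o(1)$ bound is uniform in $r$.
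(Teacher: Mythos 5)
Your proposal is correct and follows essentially the same route as the paper's proof: track the success probability $s_x$ and accept-and-succeed probability $a_x$ separately, use the fact that $\distr_1$ (resp.\ $\distr_0$) is supported entirely on $1$-inputs ($0$-inputs) to get $a \ge \tfrac23 s$ (resp.\ $a \le \tfrac13 s$) after averaging, write these averaged quantities as non-negative combinations of probabilities of $k$-terms, and invoke the $o(1)$-almost $k$-wise equivalence to force the two ratios to agree up to $1\pm o(1)$, a contradiction. Your decision-tree/padding justification of the $k$-term decomposition and your way of chaining the final inequalities through $\Ex_{\distr_0}[s_x]$ are just more explicit packagings of the same argument.
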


\begin{proof}
	Let $M$ be a $\BPPPATH$ machine which computes $f$. Then let $a(x)$  and $s(x)$ be the accepting (success) probability of $M$ on input $x$. For a distribution $\distr$ on $\{0,1\}^M$, let $a(\distr) = \Ex_{x \sim \distr}[a(x)]$ and $s(\distr) = \Ex_{x \sim \distr}[s(x)]$.
	
	By the definition of $\BPPPATH$ algorithms and the fact that $\distr_0$ ($\distr_1$) is supported on $0$-inputs ($1$-inputs), we have $a(\distr_1) \ge 2/3 \cdot s(\distr_1) $ and $a(\distr_0) \le 1/3 \cdot s(\distr_0)$.
		
	Since $M$ makes at most $k$ queries, $a(x)$ can be written as $a(x) = \sum_{i=1}^{m} a_i \cdot C_i(x)$, such that each $C_i$ is a $k'$-term for $k' \le k$ and each $a_i \ge 0$. Therefore, using the fact that $\distr_0$ and $\distr_1$ are $o(1)$-almost $k$-wise equivalent, we have
	$$
	a(\distr_1) = \sum_{i=1}^{m} \Ex_{x \sim \distr_1}[a_i \cdot C_i(x)] \ge (1-o(1)) \cdot \sum_{i=1}^{m}  \Ex_{x \sim \distr_0}[a_i \cdot C_i(x)] = (1-o(1)) \cdot a(\distr_0).
	$$ 
	
	Similarly, we have $a(\distr_0) \ge (1-o(1)) \cdot a(\distr_1)$, hence $ 1-o(1)\le a(\distr_0) / a(\distr_1) \le 1+o(1)$. The same goes for $s(\distr_0)$ and $s(\distr_1)$, so $1-o(1)\le s(\distr_0) / s(\distr_1) \le 1+o(1)$. But this means that $1-o(1)\le \frac{a(\distr_1)/s(\distr_1)}{a(\distr_0)/s(\distr_0)} \le 1+o(1)$, which contradicts the fact that $a(\distr_1)/s(\distr_1) \ge 2/3$ and $a(\distr_0)/s(\distr_0) \le 1/3$. This completes the proof.
\end{proof}

\subsection{$\mathsf{BQP}^{\Oracle} \not\subset \BPPPATH^{\Oracle}$ based on \Fcheck}

Using Theorem~\ref{theo:hard-BPPPATH} and the input distributions $\FdiP$ and $\UniP$ to $\Fore_n$, we can show the function $\Fore_n$ is hard for $\BPPPATH$ algorithms.

\begin{theo}
	There are no $\BPPPATH$ algorithms can compute $\Fore_n$ using $\poly(n)$ queries.
\end{theo}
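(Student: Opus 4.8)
The plan is to instantiate Theorem~\ref{theo:hard-BPPPATH} with the partial function $f = \Fore_n$, whose input length is $M = 2N$, and with the two distributions $\distr_0 = \UniP$ and $\distr_1 = \FdiP$. First I would check that these distributions satisfy the structural hypotheses of the theorem: by the definition of $\FdiP$ as $\Fdi$ conditioned on $\Phi_{f,g} \ge 0.07$, every sample in its support is a $1$-input of $\Fore_n$; likewise, $\UniP$ is $\Uni$ conditioned on $|\Phi_{f,g}| \le 0.01$, so every sample in its support is a $0$-input. Thus $\distr_1$ is supported on $1$-inputs and $\distr_0$ on $0$-inputs, exactly as required.

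Next I would establish the $o(1)$-almost $k$-wise equivalence of $\distr_0$ and $\distr_1$ for $k = \poly(n) = N^{o(1)}$. Lemma~\ref{lm:alm-k-indp} already tells us that both $\FdiP$ and $\UniP$ are $o(1)$-almost $k$-wise independent, i.e., each is $o(1)$-almost $k$-wise equivalent to $\Uni$. Almost $k$-wise equivalence composes (up to constant factors in the error): if $1 - \varepsilon_1 \le \Pr_{\FdiP}[C]/\Pr_{\Uni}[C] \le 1 + \varepsilon_1$ and $1 - \varepsilon_2 \le \Pr_{\UniP}[C]/\Pr_{\Uni}[C] \le 1 + \varepsilon_2$ for every $k$-term $C$, then taking the ratio gives
\[
\frac{1 - \varepsilon_1}{1 + \varepsilon_2} \le \frac{\Pr_{\FdiP}[C]}{\Pr_{\UniP}[C]} \le \frac{1 + \varepsilon_1}{1 - \varepsilon_2},
\]
and since $\varepsilon_1, \varepsilon_2 = o(1)$ both bounds are $1 \pm o(1)$. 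Hence $\FdiP$ and $\UniP$ are $o(1)$-almost $k$-wise equivalent.

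Finally I would apply Theorem~\ref{theo:hard-BPPPATH}: since $\poly(n) = N^{o(1)}$, for any fixed polynomial $q$ we may take $k = q(n)$ in the above, and conclude that no $\BPPPATH$ algorithm computes $\Fore_n$ using at most $q(n)$ queries; as $q$ is arbitrary, no $\BPPPATH$ algorithm computes $\Fore_n$ in $\poly(n)$ queries. There is no real obstacle here — this is a short assembly of results already in hand. The only point that needs a line of care is making sure the almost-$k$-wise-equivalence error stays $o(1)$ when combining the two one-sided bounds through Lemma~\ref{lm:alm-k-indp}, which is immediate from the ratio computation above.
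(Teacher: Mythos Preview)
Your proposal is correct and follows essentially the same approach as the paper: verify that $\FdiP$ and $\UniP$ are supported on $1$- and $0$-inputs respectively, use Lemma~\ref{lm:alm-k-indp} to see both are $o(1)$-almost $k$-wise independent (hence $o(1)$-almost $k$-wise equivalent to each other via the uniform distribution), and then invoke Theorem~\ref{theo:hard-BPPPATH} with $k=\poly(n)=N^{o(1)}$. The only difference is that you spell out the ratio computation for transitivity of almost-$k$-wise equivalence, which the paper leaves implicit.
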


\begin{proof}
	
	By the definition of $\FdiP$ and $\UniP$, and Lemma~\ref{lm:alm-k-indp}, we can see $\FdiP$ ($\UniP$) is supported on $1$-inputs ($0$-inputs) of $\Fore_n$, and they are both $o(1)$-almost $N^{o(1)}$-wise independent. Which means they are also $o(1)$-almost $N^{o(1)}$-wise equivalent.
	
	Since $\poly(n) = N^{o(1)}$, the theorem follows directly from Theorem~\ref{theo:hard-BPPPATH}.
\end{proof}

Using a standard diagonalization procedure (which we omit the details here), the oracle separation we want follows easily.

\begin{cor}
	There exists an oracle $\Oracle$ such that $\mathsf{BQP}^{\Oracle} \not\subset \BPPPATH^{\Oracle}$.
\end{cor}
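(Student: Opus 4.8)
The plan is to derive the corollary from the preceding theorem (that no $\BPPPATH$ algorithm computes $\Fore_n$ in $\poly(n)$ queries) together with the standard fact that $\Fore_n$ admits an efficient quantum algorithm, and then to run the routine diagonalization that lifts a query separation to an oracle separation. First I would record the two ``halves'' of the separation as query-complexity facts. On the one side, the previous theorem gives that every $\BPPPATH$ query algorithm deciding $\Fore_n$ must make $n^{\omega(1)}$ queries (indeed $N^{\Omega(1)}$ queries would fail, but all we need is superpolynomial-in-$n$). On the other side, Aaronson's analysis of \Fcheck\ shows there is a quantum algorithm making $O(1)$ queries and using $\poly(n)$ time that accepts $1$-inputs with probability $\geq 2/3$ and $0$-inputs with probability $\leq 1/3$: this is the standard ``prepare $f$-phase state, apply Hadamards, measure overlap with $g$'' routine, whose acceptance probability is a simple function of $\Phi_{f,g}$; I would cite \cite{aaronson2010bqp,aaronson2015forrelation} for this.

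Next I would set up the oracle. Fix an encoding under which, for each $n$, an oracle string of length $2N = 2\cdot 2^n$ is read as a pair $\langle f,g\rangle$, and let the oracle answer queries to the $n$-th block accordingly. The language to be separated is $L_{\Oracle} = \{1^n : \Oracle \text{ restricted to block } n \text{ encodes a } 1\text{-input of } \Fore_n\}$, where we will ensure by construction that the block-$n$ restriction always lies in the promise set of $\Fore_n$. By the efficient quantum query algorithm above, for \emph{every} such oracle $\Oracle$ we have $L_{\Oracle}\in\BQP^{\Oracle}$: the $\BQP$ machine on input $1^n$ simulates the $O(1)$-query quantum \Fcheck\ algorithm on block $n$, which runs in $\poly(n)$ time. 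So it remains to diagonalize against all $\BPPPATH$ oracle machines to force $L_{\Oracle}\notin\BPPPATH^{\Oracle}$.

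For the diagonalization, enumerate all $\BPPPATH$ oracle machines $M_1,M_2,\dots$ (each with a polynomial time bound $p_j$, hence at most $\poly(n)$ oracle queries on input $1^n$). Build $\Oracle$ in stages; at stage $j$ pick a fresh, large $n_j$ (larger than any block touched so far, and large enough that $p_j(n_j)$ is below the query lower bound for $\Fore_{n_j}$ furnished by the previous theorem). Leave block $n_j$ unset, and consider the two distributions $\FdiP$ and $\UniP$ on that block's $2N_j$ bits, which by Lemma~\ref{lm:alm-k-indp} are both $o(1)$-almost $N_j^{o(1)}$-wise independent, hence $o(1)$-almost $p_j(n_j)$-wise equivalent, and are supported on $1$-inputs and $0$-inputs of $\Fore_{n_j}$ respectively. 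Exactly the argument inside the proof of Theorem~\ref{theo:hard-BPPPATH} (applied with the bits outside block $n_j$ hard-wired to their already-fixed values, so that $M_j$'s postselected-accepting and postselected computation-path fractions become nonnegative combinations of $k$-terms in the block-$n_j$ variables with $k\le p_j(n_j)$) shows that $M_j^{\Oracle}(1^{n_j})$ cannot accept with probability $\geq 2/3$ on all of $\FdiP$'s support while accepting with probability $\leq 1/3$ on all of $\UniP$'s support. Therefore there is a choice of assignment to block $n_j$ — either a specific $1$-input in $\support(\FdiP)$ or a specific $0$-input in $\support(\UniP)$ — on which $M_j$ fails to decide $1^{n_j}\in L_{\Oracle}$ correctly; fix block $n_j$ to that assignment, and fix all remaining blocks arbitrarily within the promise (say to samples of $\UniP$). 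Taking the union of all stages yields $\Oracle$ with $L_{\Oracle}\in\BQP^{\Oracle}\setminus\BPPPATH^{\Oracle}$.

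The only genuinely nonroutine point is making sure the per-stage failure argument is really just Theorem~\ref{theo:hard-BPPPATH} applied to the residual function on one block: one has to check that fixing the other blocks does not spoil the ``$a(x)$ is a nonnegative combination of $\le k$-terms'' structure (it doesn't — restriction of a nonnegative term-combination is again a nonnegative term-combination), and that the promise is maintained, which holds because $\support(\FdiP)\subseteq\Fore_{n}^{-1}(1)$ and $\support(\UniP)\subseteq\Fore_{n}^{-1}(0)$ by construction of $\FdiP,\UniP$. Everything else — enumerating machines, choosing $n_j$ large, bookkeeping disjoint blocks — is the textbook lifting argument, so I would state it briefly and refer to the standard references rather than grinding through it.
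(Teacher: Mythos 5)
Your proposal is correct and matches the paper's intent exactly: the paper derives this corollary in one line from the preceding query-complexity theorem via ``a standard diagonalization procedure (details omitted),'' and your write-up simply spells out that standard lifting argument (efficient quantum query algorithm for $\Fore_n$ on one side, the $\BPPPATH$ query lower bound applied blockwise via $\FdiP,\UniP$ on the other, plus stage-by-stage oracle construction). No gaps; the extra care you take about restrictions preserving the nonnegative term-combination structure and about keeping every block inside the promise is exactly the right bookkeeping.
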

\subsection{The Adaptive Construction}

In this subsection we introduce a construction which turns a Boolean function into its adaptive version. 

\begin{defi}[Adaptive Construction]
	Given a function $f :¡¡D \to \{0,1\}$, such that $D \subset \{0,1\}^M$ and an integer $d$, we define $\Ada_{f,d}$, its depth $d$ adaptive version, as follows:
	
	$$
	\Ada_{f,0} := f,
	$$
	and
	
	$$
	\Ada_{f,d}: D \times D_{d-1} \times D_{d-1} \to \{0,1\}
	$$
	$$
	\Ada_{f,d}(w,x,y) := \begin{cases}
	\Ada_{f,d-1}(x) & \quad \text{if } f(w) = 0\\
	\Ada_{f,d-1}(y) & \quad \text{if } f(w) = 1\\
	\end{cases}
	$$
	where $D_{d-1}$ denotes the domain of $\Ada_{f,d-1}$. 
	
	The input to $\Ada_{f,d}$ can be encoded as a string of length $(2^{d+1}-1)\cdot M$. Thus, $\Ada_{f,d}$ is a partial function from $D^{(2^{d+1}-1)} \to \{0,1\}$,
\end{defi}

We can also interpret $\Ada_{f,d}$ more intuitively as follows:  
Given a full binary tree with height $d$, each node encodes a valid input to $f$. The answer is determined by the following procedure: Starting with the root, we compute $f$ with the corresponding input; if it is $0$, we then go to the left child, otherwise we go to the right child. Once we reach a leaf, we output the answer to the input on it.

We have the following theorem, showing that certain functions' adaptive version are hard for $\BPPPATH$ algorithms.

\begin{theo} \label{theo:hard-adapt}
	Fix a function $f : D \to \{0,1\}$ such that $D \subset \{0,1\}^M$. Suppose there are two distributions $\distr_0$ and $\distr_1$ supported on $0$-inputs and $1$-inputs respectively, such that $\distr_1$ $o(1)$-almost $k$-wise dominates $\distr_0$ for any $k \le \poly(n)$. Then no $\poly(n)$-time $\BPPPATH$ algorithms can compute $\Ada_{f,n}$.
\end{theo}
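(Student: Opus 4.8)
The plan is to reduce to query complexity and follow the template of Theorem~\ref{theo:hard-BPPPATH}, but with input distributions tailored to the tree so that one-sided domination suffices. Since $\Ada_{f,n}$ has input length $(2^{n+1}-1)M$, a $\poly(n)$-time algorithm makes only $k=\poly(n)$ queries, so it is enough to rule out $k$-query $\BPPPATH$ algorithms. Define distributions $\mu_b^{(d)}$ on inputs of $\Ada_{f,d}$ recursively: $\mu_b^{(0)}:=\distr_b$; and for $d\ge 1$, to draw from $\mu_b^{(d)}$ flip a fair coin $c$, sample the root node from $\distr_c$, sample the $c$-side child's subtree from $\mu_b^{(d-1)}$, and sample the $(1-c)$-side child's subtree from the balanced mixture $\nu^{(d-1)}:=\tfrac12\mu_0^{(d-1)}+\tfrac12\mu_1^{(d-1)}$. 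Evaluating $\Ada_{f,d}$ on such a sample follows the coins down to a leaf drawn from $\distr_b$, so $\mu_b:=\mu_b^{(n)}$ is supported entirely on $b$-inputs. Using the obvious coupling (same coin, root sample, and off-side subtree at every level, recursing on the $c$-side), an easy induction shows $\mu_0$ and $\mu_1$ can be coupled so as to agree at \emph{every} node except the single ``active leaf'' $\ell^{*}$ reached by following the coins, where $\mu_0$ places $\distr_0$ and $\mu_1$ places $\distr_1$; moreover $\ell^{*}$ is uniform over the $2^n$ leaves and, because the off-path subtrees are the balanced mixture, no \emph{fixed} leaf node reveals $b$.

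Now suppose towards a contradiction that a $k$-query $\BPPPATH$ machine $M$ (with $k=\poly(n)=N^{o(1)}$) computes $\Ada_{f,n}$. As in Theorem~\ref{theo:hard-BPPPATH}, write the ``accept-and-succeed'' probability $a(x)=\sum_\pi w_\pi C_\pi(x)$ and the ``succeed'' probability $s(x)=\sum_\pi w'_\pi C'_\pi(x)$ as nonnegative combinations of $k$-terms, one per computation path. Since $\mu_b$ is supported on $b$-inputs, averaging $a(x)\ge\tfrac23 s(x)$ over $x\sim\mu_1$ and $a(x)\le\tfrac13 s(x)$ over $x\sim\mu_0$ gives $a(\mu_1)\ge\tfrac23 s(\mu_1)$ and $a(\mu_0)\le\tfrac13 s(\mu_0)$. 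For the easy direction, condition on everything in the coupling except the $M$ bits of $\ell^{*}$: the only factor of $\Pr[C_\pi]$ that can change between $\mu_0$ and $\mu_1$ is the single sub-term on $\ell^{*}$, which by $o(1)$-almost $k$-wise domination can only increase, up to a $(1-o(1))$ factor; hence $\Pr_{\mu_1}[C_\pi]\ge(1-o(1))\Pr_{\mu_0}[C_\pi]$ for every $\pi$ (and likewise with $C'_\pi$), so $a(\mu_1)\ge(1-o(1))a(\mu_0)$ and $s(\mu_0)\le(1+o(1))s(\mu_1)$.

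The crux is the reverse bound $a(\mu_1)\le a(\mu_0)+o(1)\cdot s(\mu_1)$. By the same conditioning and factorization, the excess $\Pr_{\mu_1}[C_\pi]-\Pr_{\mu_0}[C_\pi]$ vanishes unless $C_\pi$ touches $\ell^{*}$, and is then at most $\Pr_{\mu_1}[C_\pi\mid\text{rest}]$; summing over the accept-and-succeed paths $\pi$ this telescopes to
\[
a(\mu_1)-a(\mu_0)\;\le\;\Pr_{x\sim\mu_1,\,M}\bigl[\,M(x)\text{ succeeds and queries a bit of the active leaf }\ell^{*}(x)\,\bigr].
\]
Thus it suffices to prove the \textbf{key lemma}: $\Pr_{x\sim\mu_1,\,M}\bigl[\,M\text{ queries }\ell^{*}(x)\ \big|\ M(x)\text{ succeeds}\,\bigr]=o(1)$. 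Granting it, $a(\mu_1)-a(\mu_0)\le o(1)\cdot s(\mu_1)$, and combining with the previous paragraph, $\tfrac23 s(\mu_1)\le a(\mu_1)\le a(\mu_0)+o(1)s(\mu_1)\le\tfrac13 s(\mu_0)+o(1)s(\mu_1)\le(\tfrac13+o(1))s(\mu_1)$; dividing by $s(\mu_1)>0$ yields $\tfrac23\le\tfrac13+o(1)$, a contradiction. Finally, the passage from ``no $\poly(n)$-query $\BPPPATH$ algorithm'' to ``no $\poly(n)$-time $\BPPPATH$ algorithm'' is immediate since $\Ada_{f,n}$ has exponential-in-$n$ input length in the intended applications.

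The key lemma --- that a $\poly(n)$-query computation path cannot ``locate'' the hidden active leaf even conditioned on being an accepting, successful path --- is the main obstacle, and is exactly where the depth $n$ and the balanced-mixture off-path subtrees are essential (a shallow construction would let $M$ simply query the few candidate leaves outright). I would prove it by induction on the depth: peeling off the root, $M$'s queries to the root node tell it essentially nothing about the root coin $c$, except that --- by the one-sidedness of domination --- they may occasionally \emph{certify} that $c=1$ (never that $c=0$), and each such certification occurs only on computation branches whose probability carries an extra factor $\tfrac12$ (the coin must truly be $1$); meanwhile the $(1-c)$-side subtree is distributed exactly as a depth-$(d-1)$ instance drawn from $\nu^{(d-1)}$, so queries spent there cannot help locate $\ell^{*}$, which lives on the $c$-side. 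Propagating this bound down the tree, for $M$ to confine $\ell^{*}$ to the $\le k$ nodes it touches it must certify a chain of $\Omega(n-\log k)$ coins equal to $1$, an event of probability $2^{-\Omega(n)}$ that is charged against --- rather than allowed to swamp --- the success probability, giving conditional probability $o(1)$. Making this ``information versus probability'' accounting rigorous, uniformly over adaptive and randomness-heavy $\BPPPATH$ machines, is the delicate step.
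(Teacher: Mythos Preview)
Your reduction to a ``key lemma'' is where the argument breaks down, and you acknowledge as much. The difficulty is not merely technical: the whole point of $\BPPPATH$ is that the machine may choose its success event to be arbitrarily rare, so an unconditional bound like ``certifying a chain of $\Omega(n)$ coins has probability $2^{-\Omega(n)}$'' says nothing about the probability \emph{conditioned on success}. Your sketch never explains why the success probability cannot itself be $2^{-\Omega(n)}$ and concentrated exactly on the branches that have located~$\ell^{*}$. Concretely, for $f=\Collision_n$ with $\distr_0=\Dperm$ and $\distr_1=\Dtwoone$, a machine can walk down a guessed path, at each ``go right'' step query two random positions and postselect on a collision; nothing in your outline rules out that, conditioned on success, such a machine sits at~$\ell^{*}$ with constant probability. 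Your inductive ``peel off the root'' idea also tacitly assumes the success event factors level by level, which an adaptive machine need not respect. Finally, note that if your argument worked it would prove a strictly stronger statement than the theorem: you would rule out $\poly(n)$-\emph{query} $\BPPPATH$ algorithms, whereas the paper only rules out $\poly(n)$-\emph{time} ones --- and explicitly footnotes that the time bound is what makes its proof go through.

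The paper's proof avoids any ``the machine cannot find the active leaf'' reasoning altogether. It uses the much simpler product distributions $\distr_b^d=(\distr_b,\distr_b^{d-1},\distr_b^{d-1})$ (every node drawn from $\distr_b$) and proves by induction on~$d$ that any nonnegative polynomial pair $(a,r)$ computing $\Ada_{f,d}$ satisfies $a(\distr_1^d)\ge 2^{2^d}\,r(\distr_0^d)$. The inductive step chains three moves: apply the depth-$(d{-}1)$ hypothesis to the left subtree with the root held at $\distr_0$; use one-sided domination to switch the root from $\distr_0$ to $\distr_1$ (losing only a $1-o(1)$ factor); use correctness on the resulting $0$-input distribution to swap $a$ for $r$; then apply the depth-$(d{-}1)$ hypothesis again to the right subtree. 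This squares the ratio at every level. Since $r(\distr_0^n)\ge 1$, one obtains an input with $a(x)\ge 2^{2^n}$, contradicting the bound $a(x)\le\exp(\poly(n))$ that follows from the machine having only $\poly(n)$ \emph{time} (hence at most $\exp(\poly(n))$ computation paths). No coupling, no hidden-leaf lemma, and no balanced off-path mixtures are needed.
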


\begin{proof}
	
	We first discuss some properties for a function which admits a $\poly(n)$-time $\BPPPATH$ algorithm.

	Suppose there is a $\poly(n)$-time $\BPPPATH$ algorithm for a function $g$. Let $x$ be the input. Amplifying the probability gap a bit, we have two polynomials $a(x)$ and $r(x)$ (representing the number of accepting paths and rejecting paths), such that: 
	
	\begin{itemize}
		\item When $g(x)=1$, $a(x) > 3\cdot r(x)$ and $a(x) \ge 1$.
		\item When $g(x)=0$, $a(x) < r(x)/3$ and $r(x) \ge 1$.
		\item We can write $a(x)$ as $a(x) := \sum_{i=1}^{m} a_i \cdot C_i(x)$, such that each $C_i$ is a $\poly(n)$-term, each $a_i$ is non-negative, and for all input $x$, $a(x) \le \exp(\poly(n))$. The same goes for $r(x)$.
	\end{itemize}
	
	The first two claims are straightforward,  and the last claim is due to the fact that one can create at most $\exp(\poly(n))$ possible computation paths in $\poly(n)$ time.\footnote{This is why we need to state $\poly(n)$-time instead of $\poly(n)$-query in Theorem~\ref{theo:hard-adapt}.}
	
	We say a pair of polynomials $a(x)$ and $r(x)$ computes a function $g$ if it satisfies the above three conditions (note it may not present any $\BPPPATH$ algorithms). Then we are going to prove there cannot be such a pair of polynomials for $\Ada_{f,n}$, which refutes the possibility of a $\poly(n)$-time $\BPPPATH$ algorithm as well.
	
	We are going to show that there must be an $x$ such that $a(x) \ge 2^{2^n}$ for a pair of polynomials $a(x)$ and $r(x)$ computing $\Ada_{f,n}$, which contradicts the third condition.
	 
	For each integer $d$, we will inductively construct two distributions  $\distr_1^d$ and $\distr_0^d$ supported on $1$-inputs and $0$-inputs to $\Ada_{f,d}$ respectively, such that $a(\distr_1^d)/r(\distr_0^d) \ge 2^{2^d}$ for any pair of polynomials $a(x)$ and $r(x)$ computing $\Ada_{f,d}$.
	
	The base case $d=0$ is very simple. $\Ada_{f,0}$ is just the $f$ itself. We let $\distr_1^0 = \distr_1$ and $\distr_0^0=\distr_0$. Since $a(x)$ and $r(x)$ are non-negative linear combination of $\poly(n)$-terms, and $\distr_1$ $o(1)$-almost $k$-wise dominates $\distr_0$ for any $k \le \poly(n)$, we must have $r(\distr_1)/r(\distr_0) \ge 1 - o(1)$. Also,  $a(\distr_1)\ge 3 \cdot r(\distr_1)$ as $\distr_1$ is supported on $1$-inputs to $f$. Putting these facts together, we have $a(\distr_1) \ge 2 \cdot r(\distr_0)$, which means $a(\distr_1^0)/r(\distr_0^0) \ge 2=2^{2^0}$.
	
	For $d>0$, suppose that we have already constructed distributions $\distr_{d-1}^0$ and $\distr_{d-1}^1$ on inputs of $\Ada_{f,d-1}$, we are going to construct $\distr_d^0$ and $\distr_d^1$ based on them.
	
	We first decompose the input to $\Ada_{f,d}$ as a triple $(w,x,y) \in D \times D_{d-1} \times D_{d-1}$ as in its definition, in which $D$ denotes the domain of $f$, and $D_{d-1}$ denotes the domain of $\Ada_{f,d-1}$.
	
	For a pair of polynomials $a(w,x,y)$ and $r(w,x,y)$ computing $\Ada_{f,d}$, consider the following two polynomials on $x$: 
	$$
	a_L(x) := a(\distr_0,x,\distr_0^{d-1}) =\Ex_{w \sim \distr_0,y \sim \distr_0^{d-1}} [a(w,x,y)],
	$$ and
	$$
	r_L(x) := r(\distr_0,x,\distr_0^{d-1}) = \Ex_{w \sim \distr_0,y \sim \distr_0^{d-1}} [r(w,x,y)].
	$$
	
	Note that $\distr_0$ is supported on $0$-inputs, which means for any fixed $W \in \support(\distr_0)$ and any $Y \in \support(\distr_0^{d-1})$, by the definition of $\Ada_{f,d}$, the polynomial pair $a(W,x,Y)$ and $r(W,x,Y)$ must compute $\Ada_{f,d-1}$. It is not hard to verify by linearity, that their expectations $a_L(x)$ and $r_L(x)$ also computes $\Ada_{f,d-1}$ (Recall that a pair of polynomials computes a function $g$ if it satisfies the three conditions). 
	
	Therefore, plugging in $\distr_{d-1}^0$ and $\distr_{d-1}^{1}$, we have $a_L(\distr_{d-1}^1) \ge 2^{2^{d-1}} \cdot r_L(\distr_{d-1}^0)$, which means $a(\distr_0,\distr_{d-1}^1,\distr_{d-1}^0) \ge 2^{2^{d-1}} \cdot r(\distr_0,\distr_{d-1}^0,\distr_{d-1}^0)$.
	
	Then, for each fixed $X,Y$, the polynomial $a_M(w) := a(w,X,Y)$ is a non-negative linear combination of $\poly(n)$-terms, since $\distr_1$ $o(1)$-almost $k$-wise dominates $\distr_0$ for any $k \le \poly(n)$, we have $a_M(\distr_1)/a_M(\distr_0) \ge 1-o(1)$. Hence by linearity, $a(\distr_1,\distr^1_{d-1},\distr^0_{d-1}) \ge (1-o(1)) \cdot a(\distr_0,\distr^1_{d-1},\distr^0_{d-1})$.
	
	Now, notice that $\distr_1$ is supported on $1$-inputs to $f$, and $\distr^0_{d-1}$ is supported on $0$-inputs to $\Ada_{f,d-1}$, so $(\distr_1,\distr^1_{d-1},\distr^0_{d-1})$ is supported on $0$-inputs, therefore 
	 $$
	 r(\distr_1,\distr^1_{d-1},\distr^0_{d-1}) \ge 3\cdot a(\distr_1,\distr^1_{d-1},\distr^0_{d-1}) \ge a(\distr_0,\distr^1_{d-1},\distr^0_{d-1})
	 \ge 2^{2^{d-1}} \cdot a(\distr_0,\distr^0_{d-1},\distr^0_{d-1}).
	 $$ 
	 
	 Finally, consider the polynomials on $y$ defined by
	  
	 $$
	 a_{R}(y) := a(\distr_1,\distr^1_{d-1},y)
	 \text{ and } 
	 r_{R}(y) := r(\distr_1,\distr^1_{d-1},y).
	 $$
	 By the same augment as above, they are also a pair of polynomials which computes $\Ada_{f,d-1}$, so plugging in $\distr_{d-1}^0$ and $\distr_{d-1}^1$ again, we have $a_{R}(\distr_{d-1}^1) \ge 2^{2^{d-1}} \cdot r_{R}(\distr_{d-1}^0)$, which means 
	 
	 $$
	 a(\distr_1,\distr^1_{d-1},\distr^1_{d-1}) \ge 2^{2^{d-1}} \cdot r(\distr_1,\distr^1_{d-1},\distr^0_{d-1}) \ge  2^{2^d} \cdot r(\distr_0,\distr^0_{d-1},\distr^0_{d-1}).
	 $$
	 
	 So we can just take $\distr^1_d = (\distr_1,\distr^1_{d-1},\distr^1_{d-1})$ and $\distr^0_d = (\distr_0,\distr^0_{d-1},\distr^0_{d-1})$. It is not hard to see that these distributions are supported on $1$-inputs and $0$-inputs to $\Ada_{f,d}$ respectively.
	
	 Then for a pair of polynomials $a(x)$ and $r(s)$ computing $\Ada_{f,n}$, we have $a(\distr^1_n) \ge 2^{2^n} \cdot r(\distr^0_n) \ge 2^{2^n}$, which means there exists an $x$  such that $a(x) \ge 2^{2^n}$, and this completes the proof.

\end{proof}

\subsection{$\PTIME^{\SZK^\Oracle} \not\subset \BPPPATH^{\Oracle}$}

\newcommand{\fada}{f_{\Ada}}

Let $\fada := \Ada_{\Collision_n,n}$. There is a simple $\PTIME^{\SZK}$ algorithm for $\fada$:  invoke the $\SZK$ oracle $n$ times to decide go to the left child or the right child, and invoke it once again to output the answer to the input on the reached leaf.

Using Theorem~\ref{theo:hard-adapt}, we immediately know $\fada$ is hard for $\poly(n)$ time $\BPPPATH$ algorithms.

\begin{lemma}
	There are no $\poly(n)$-time $\BPPPATH$ algorithms for $\fada$.
\end{lemma}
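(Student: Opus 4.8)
The plan is to obtain this as an immediate corollary of Theorem~\ref{theo:hard-adapt} applied to $f = \Collision_n$, with $d = n$. The only thing to check is the hypothesis of Theorem~\ref{theo:hard-adapt}: that there exist distributions $\distr_0$ supported on $0$-inputs and $\distr_1$ supported on $1$-inputs of $\Collision_n$ such that $\distr_1$ $o(1)$-almost $k$-wise dominates $\distr_0$ for all $k \le \poly(n)$. Recall that a $1$-input to $\Collision_n$ is (an encoding of) a $2$-to-$1$ function, and a $0$-input is a permutation. So I would take $\distr_1 = \Dtwoone^n$ and $\distr_0 = \Dperm^n$, the uniform distributions over $2$-to-$1$ functions and over permutations respectively, as defined in Section~4.2.

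First I would note that $\Dtwoone^n$ is by construction supported on $1$-inputs of $\Collision_n$ and $\Dperm^n$ on $0$-inputs. Next, Lemma~\ref{lm:col-distr} states exactly that for any $k = N^{o(1)}$, $\Dtwoone^n$ $o(1)$-almost $k$-wise dominates $\Dperm^n$. Since $N = 2^n$, we have $\poly(n) = N^{o(1)}$, so in particular the domination holds for all $k \le \poly(n)$. Therefore the pair $(\distr_0, \distr_1) = (\Dperm^n, \Dtwoone^n)$ satisfies all the hypotheses of Theorem~\ref{theo:hard-adapt} with $f = \Collision_n$. Applying that theorem with $d = n$ gives that no $\poly(n)$-time $\BPPPATH$ algorithm can compute $\Ada_{\Collision_n, n} = \fada$, which is the claim.

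There is essentially no obstacle here: all the work has been front-loaded into Lemma~\ref{lm:col-distr} and Theorem~\ref{theo:hard-adapt}. The only minor point worth spelling out is the translation between the ``$k = N^{o(1)}$'' quantifier in Lemma~\ref{lm:col-distr} and the ``$k \le \poly(n)$'' quantifier in Theorem~\ref{theo:hard-adapt}, but these coincide since $N = 2^n$. I would write the proof in two or three sentences: identify $\distr_1 = \Dtwoone^n$ and $\distr_0 = \Dperm^n$, cite Lemma~\ref{lm:col-distr} for the domination property (noting $\poly(n) = N^{o(1)}$), and invoke Theorem~\ref{theo:hard-adapt}.

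\begin{proof}
	Take $\distr_1 = \Dtwoone^n$ and $\distr_0 = \Dperm^n$. By definition, $\Dtwoone^n$ is supported on $2$-to-$1$ functions and $\Dperm^n$ on permutations, so $\distr_1$ is supported on $1$-inputs of $\Collision_n$ and $\distr_0$ on $0$-inputs. By Lemma~\ref{lm:col-distr}, for any $k = N^{o(1)}$ the distribution $\Dtwoone^n$ $o(1)$-almost $k$-wise dominates $\Dperm^n$; since $N = 2^n$ we have $\poly(n) = N^{o(1)}$, so this holds in particular for every $k \le \poly(n)$. Thus $f = \Collision_n$, together with $\distr_0$ and $\distr_1$, satisfies the hypotheses of Theorem~\ref{theo:hard-adapt}. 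Applying Theorem~\ref{theo:hard-adapt} with $d = n$, we conclude that no $\poly(n)$-time $\BPPPATH$ algorithm can compute $\Ada_{\Collision_n,n} = \fada$.
\end{proof}
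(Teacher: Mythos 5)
Your proof is correct and is essentially identical to the paper's own argument: both take $\distr_1 = \Dtwoone^n$ and $\distr_0 = \Dperm^n$, invoke Lemma~\ref{lm:col-distr} for the domination property (using $\poly(n) = N^{o(1)}$), and conclude via Theorem~\ref{theo:hard-adapt}.
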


\begin{proof}
	Note that $\Dtwoone^n$ and $\Dperm^n$ are supported on $1$-inputs and $0$-inputs to $\Collision_n$ respectively, and $\Dtwoone^n$ $o(1)$-almost $k$-wise dominates $\Dperm^n$ for any $k \le \poly(n)$ by Lemma~\ref{lm:col-distr}. Then the lemma directly follows from Theorem~\ref{theo:hard-adapt}.
\end{proof}

Now the following corollary follows directly by a standard diagonalization argument.

\begin{cor}
	There exists an oracle $\Oracle$ such that $\PTIME^{\SZK^\Oracle} \not\subset \BPPPATH^\Oracle$.
\end{cor}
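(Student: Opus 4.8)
The plan is to lift the query-complexity hardness of $\fada = \Ada_{\Collision_n,n}$ to a genuine oracle separation by the standard recipe described in Section~3.1. Concretely, I would reserve, for each parameter $n$ drawn from a sparse set of ``active'' values, a dedicated block of oracle bits of length $L(n)$ equal to the input length of $\fada$ (which is $2^{\Theta(n)}$, so that $\poly(n) = \polylog(L(n))$), and define the language
$\mathcal{L}^{\Oracle} = \{\,1^{n} : \text{the active block of }\Oracle\text{ at length }n\text{, read as an input of }\fada\text{, evaluates to }1\,\}$;
all non-active oracle bits are ignored. Then I would argue the inclusion and the non-inclusion separately.

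For the easy inclusion $\mathcal{L}^{\Oracle}\in\PTIME^{\SZK^{\Oracle}}$ for every $\Oracle$: on input $1^{n}$ the machine walks down the height-$n$ decision tree that defines $\fada$. At each of the $n+1$ visited nodes, the associated input to $\Collision_n$ is an explicit substring of the active block; since $\PCollision$ has an $\SZK$ protocol whose verifier makes only $\poly(n)$ oracle queries (and hence relativizes), a single $\SZK^{\Oracle}$ query decides whether that node's function is $2$-to-$1$, which tells us which child to descend to, and at the leaf tells us the output. This runs in $\poly(n)$ time with $n+1$ oracle calls.

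For the non-inclusion I would diagonalize against $\BPPPATH^{\Oracle}$: enumerate all pairs $(M_i,c_i)$ consisting of an oracle $\BPPPATH$-type machine and a polynomial clock $c_i$, and process them one at a time, picking for the $i$-th pair a fresh active length $n_i$ large enough that all earlier blocks are already frozen. On the fixed input $1^{n_i}$, with the non-active oracle bits frozen, the accepting-path and succeeding-path counts of $M_i$ become nonnegative linear combinations of $\poly(n_i)$-terms in the $L(n_i)$ active bits, each bounded by $\exp(\poly(n_i))$; after routine gap amplification this is precisely the object ruled out by the preceding lemma (equivalently by Theorem~\ref{theo:hard-adapt}, using that $\Dtwoone^{n}$ and $\Dperm^{n}$ are supported on $1$- and $0$-inputs of $\Collision_n$ and that $\Dtwoone^{n}$ $o(1)$-almost $k$-wise dominates $\Dperm^{n}$ for all $k\le\poly(n)$). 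Hence there is a choice of the $L(n_i)$ active bits on which $M_i^{\Oracle}$, on input $1^{n_i}$, either violates the $\BPPPATH$ promise or returns the wrong value of $\fada$; I freeze those bits and move on, obtaining in the limit an oracle with $\mathcal{L}^{\Oracle}\notin\BPPPATH^{\Oracle}$ yet $\mathcal{L}^{\Oracle}\in\PTIME^{\SZK^{\Oracle}}$.

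The one point that needs care — and the main obstacle, such as it is — is that $\BPPPATH$ is a semantic class, so one cannot simply read off ``the language of $M_i^{\Oracle}$'' and diagonalize against it. The resolution is exactly that Theorem~\ref{theo:hard-adapt} forbids \emph{every} pair of polynomials meeting its three conditions, and a fixed polynomial-time oracle machine on a fixed input produces such a pair once the irrelevant oracle bits are fixed; this is why the theorem was phrased in terms of time rather than queries. Everything else — choosing $L(n)$, checking that the $\SZK$ protocol for $\Collision_n$ relativizes, and the bookkeeping of frozen blocks — is routine, so I would state it only briefly.
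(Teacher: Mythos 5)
Your proposal is correct and follows essentially the same route as the paper: the paper's own proof of this corollary is just the ``standard diagonalization argument'' applied to the preceding lemma (no $\poly(n)$-time $\BPPPATH$ algorithm for $\fada$, via Theorem~\ref{theo:hard-adapt} and Lemma~\ref{lm:col-distr}) together with the simple $\PTIME^{\SZK}$ algorithm for $\fada$, and you have merely written out that standard encoding, freezing, and enumeration bookkeeping, including the correct observation that the semantic nature of $\BPPPATH$ is handled because Theorem~\ref{theo:hard-adapt} rules out every pair of path-count polynomials satisfying its three conditions, so each machine must either violate the promise or err on some setting of the active block.
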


	\section{Oracle Separations from \SZK\ and \QSZK}


In this section we give a simple but powerful method to construct problems which are hard for $\SZK$ or $\QSZK$. The construction is inspired by the cheat sheet functions in \cite{aaronson2015separations}.\footnote{In fact, it is a simpler version of the original construction in \cite{aaronson2015separations}, as there is no need to certify the input domain.}

\newcommand{\xpart}{instance}
\newcommand{\ypart}{check-bit}

\begin{defi}[The check-bit construction]
Let $f : D \to [R]$ be a function such that $D \subset \{0,1\}^M$. We define its check-bit version $\chk{f}$ as follows:

$\chk{f}$ is a function from $D \times \{0,1\}^R \to \{0,1\}$. We call the first part of its input as the \xpart\ part and the second part as the \ypart\ part. For $(x,y) \in D \times \{0,1\}^R$, we define $\chk{f}(x,y) = y_{f(x)}$, that is, the $f(x)^{th}$ bit in the \ypart\ part.

When we only have a Boolean function $f:D \to \{0,1\}$, we can take $c$ copies of it to get a function $f^{\otimes c}: D^{c} \to [2^c]$ (we can fix a bijection between $\{0,1\}^c$ and $[2^c]$). And apply the above construction to get function $\chk{f^{\otimes c}}$.
\end{defi}

Given a function $f$ which has a large image, and needs a lot of queries to evaluate for a randomized algorithm or a quantum one. Then its check-bit version, $\chk{f}$, should be hard for a $\SZK$ protocol or a $\QSZK$ protocol as well. Since intuitively, if the prover want to convince the verifier that the $\ell^{th}$ bit is $1$ for $\ell = f(x)$, she must send some information about $\ell$, but $\ell$ is very hard for the verifier to obtain herself, as $f$ is hard for randomized or quantum algorithms. So it would contradict the zero-knowledge requirement.

\subsection{$\BQP^{\Oracle} \not\subset \SZK^{\Oracle}$ based on \PSimon's problem}

\newcommand{\fsimon}{f_{\Simon}}

By interpreting $\Simon_n$'s output as an integer in $[2^n] = [N]$, we can construct its check-bit version $\fsimon := \chk{\Simon}$.

There is a trivial $\poly(n)$-query quantum algorithm for $\fsimon$: compute the $\Simon_n$ function with input given in the \xpart\ part, then output the corresponding bit in the \ypart\ part.

We are going to show that there are no efficient $\SZK$ protocols for $\fsimon$.

\begin{lemma}
	There are no $\SZK$ protocols for $\fsimon$ in which the verifier makes only $\poly(n)$ queries to the input.
\end{lemma}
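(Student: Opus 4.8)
The plan is to use the characterization of \SZK\ via statistical difference (the Sahai--Vadhan result already quoted in the excerpt): if $\fsimon$ had an \SZK\ protocol with a $\poly(n)$-query verifier, then there would be two families of distributions $A(x,y)$ and $A'(x,y)$, each samplable by a circuit making $\poly(n)$ queries to the input $(x,y)$, such that $\|A-A'\|$ is small on $1$-inputs and large on $0$-inputs. The key idea is that the \ypart\ part $y \in \{0,1\}^N$ should be thought of as ``hidden'' from the verifier: to tell whether $y_{f(x)} = 1$, the sampling circuits must effectively query the single coordinate $f(x)$ of $y$, but a $\poly(n)$-query randomized procedure cannot find $f(x)$ because $\Simon_n$ is hard for randomized algorithms (Lemma~\ref{lm:hard-simon}). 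So I would set up a distribution over inputs: draw $x$ from $\Simon_n$'s input distribution (so $f = \Simon_n$ is a well-defined secret-string function), and then choose $y$ to make $(x,y)$ a $1$-input or a $0$-input by setting $y_{f(x)} = 1$ or $0$ and leaving the other coordinates arbitrary (say uniformly random, or all-zero except the planted bit).

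The core argument I would run is a hybrid/reduction: take the \SZK\ verifier's sampling circuits $A(x,\cdot)$ and $A'(x,\cdot)$ and argue that, for a random $(x,y)$ of the above form, the behavior of these circuits is statistically close whether the planted coordinate is $f(x)$ or some unrelated coordinate $\ell$ — because the circuit makes only $\poly(n)$ queries into $y$, and in order for flipping the value at coordinate $f(x)$ to change the output distribution noticeably, the circuit must query coordinate $f(x)$ with non-negligible probability; but a distribution over $\poly(n)$-size query sets that hits $f(x)$ with non-negligible probability, combined with the query access to $x$, yields a $\poly(n)$-query randomized algorithm that outputs $f(x)$ with non-negligible probability, contradicting Lemma~\ref{lm:hard-simon} (which bounds the success probability by $1/\sqrt[3]{N} = o(1)$). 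Concretely I would argue: if $\|A_{\distr_1} - A'_{\distr_1}\|$ and $\|A_{\distr_0} - A'_{\distr_0}\|$ differ by a constant (as they must, being $\le 1/3$ versus $\ge 2/3$), then since $\distr_0$ and $\distr_1$ differ only in one planted bit of $y$ whose location is $f(x)$, one can extract a guesser for $f(x)$.

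The cleanest way to package this is probably via the check-bit analogue of the "almost $k$-wise" machinery, but since here the hard part is really about the unpredictability of $f(x)$ rather than about fooling with almost-independent distributions, I expect it is more direct to reason about the sampling circuits and queries explicitly. The main obstacle I anticipate is handling adaptivity and the fact that the sampling circuit sees both $x$ and $y$: the circuit could try to compute $f(x)$ first (which it cannot do well) or could behave adaptively, so I need an averaging argument showing that if its output distribution is sensitive to coordinate $f(x)$ of $y$ then it must, with comparable probability, query that coordinate, and then a standard "the distribution of queried coordinates gives a predictor" step. A secondary subtlety is making sure the distributions $\distr_0,\distr_1$ I plant are actually supported on legitimate $\Simon_n$ instances (so that the promise is met) and that the single flipped $y$-bit genuinely flips the value of $\fsimon$; choosing $y$ to be all zeros except the planted coordinate makes this transparent. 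Once the predictor is extracted, Lemma~\ref{lm:hard-simon} closes the argument, since any $\poly(n) = N^{o(1)}$-query randomized procedure succeeds with probability at most $1/\sqrt[3]{N}$, which is far below the constant bias forced by the \SZK\ soundness/completeness gap.
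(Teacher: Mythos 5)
Your proposal is correct and follows essentially the same route as the paper's proof: invoke the Sahai--Vadhan statistical-difference characterization, plant the single $1$-bit of the all-zero check-bit string at position $\Simon_n(x)$, note that one of the two sampling distributions must be noticeably sensitive to that flip (triangle inequality), hence must query that coordinate with constant probability, and output a random queried index to get a $\poly(n)$-query randomized solver for $\Simon_n$ with $1/\poly(n)$ success probability, contradicting Lemma~\ref{lm:hard-simon}. No substantive difference from the paper's argument.
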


\begin{proof}
	For the contradiction, suppose there is such a $\SZK$ protocol for $\fsimon$, in which the verifier makes only $\poly(n)$ queries to the input. Without loss of generality, we can assume the verifier always makes exactly $T \le \poly(n)$ queries to the input.
	
	Now, based on that protocol, we are going to construct a randomized algorithm for $\Simon_n$ with only $\poly(n)$ queries but $1/\poly(n)$ success probability, which clearly contradicts Lemma~\ref{lm:hard-simon} as $N =2^n$.
	
	Let the input to $\fsimon$ be $z=(x,y)$. By the completeness result by Sahai and Vadhan  \cite{sahai2003complete}, such a protocol implies that we have two distributions $\mu_1(z)$ and $\mu_2(z)$, such that one can generate a sample from them using only $\poly(n)$ queries to the input, and $\|\mu_1(z) - \mu_2(z) \| \ge 1 - 2^{-n}$ when $\fsimon(z)=1$, while $\|\mu_1(z) - \mu_2(z)\| \le 2^{-n}$ when $\fsimon(z)=0$.
	
	Let $x$ be a valid input to $\Simon_n$, $y$ be the all-zero string in $\{0,1\}^N$, and $y'$ be the string obtained by changing the $\Simon_n(x)^{th}$ bit in $y$ to $1$. Then by the definition of $\fsimon$, we can see that $\|\mu_1(x,y)-\mu_2(x,y)\| \le 2^{-n}$ and $\|\mu_1(x,y') - \mu_2(x,y')\| \ge 1-2^{-n}$. By triangle inequality, we can see that either $\|\mu_1(x,y) - \mu_1(x,y')\| \ge 1/3$ or $\|\mu_2(x,y) - \mu_2(x,y')\| \ge 1/3$.
	
	We can now describe our algorithm, we first guess a random index $i \in [2]$, so with probability $1/2$, we have $\|\mu_i(x,y) - \mu_i(x,y')\| \ge 1/3$. But since $y$ and $y'$ only differs at the position $\ell = \Simon_n(x)$, when drawing sample from $\mu_i(x,y)$, it must query the $\ell^{th}$ bit of $y$ with probability at least $1/3$, for otherwise $\|\mu_i(x,y) - \mu_i(x,y')\|$ would be smaller than $1/3$. So we simply draw a sample from $\mu_i(x,y)$, and output randomly an index in the \ypart\ part which the sampling algorithm $\mu_i$ has queried. As discussed above, this algorithm computes $\Simon_n$ with probability at least $1/\poly(n)$, and this completes the proof.
\end{proof}

\subsection{$\PTIME^{\SZK^{\Oracle}} \not\subset \QSZK^{\Oracle}$}

\newcommand{\fcol}{f_{\Collision}}

Let $c=10n$. We are going to use the following function: $\fcol := \chk{\Collision^{\otimes c}}$, the check-bit version of  $\Collision_n^{\otimes c}$.

There is a simple $\PTIME^{\SZK}$ algorithm for it: Given input $z = (x,y)$, invoke the $\SZK$ oracle for $c$ times to calculate $\ell = \Collision^{\otimes c}(x)$, then output the $\ell^{th}$ bit of $y$.

We are going to show that there cannot be any efficient $\QSZK$ protocols for $\fcol$. The following proof is similar to the proof of Theorem~12 in \cite{aaronson2015separations}.

We will need the following strong direct product theorem due to Lee and Roland \cite{lee2013strong}.
\begin{theo}
	\label{theo:sdpt}
	Let $f$ be a (partial) function with ${Q}_{1/4}\left(f\right)\geq T'$. \
	Then any $T'$-query quantum algorithm evaluates $c$
	copies of $f$ with success probability at most $O(\left(3/4\right)^{c/2})$.
\end{theo}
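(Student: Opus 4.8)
Theorem~\ref{theo:sdpt} is a known result, stated here as in Lee and Roland~\cite{lee2013strong}, so in the paper we simply invoke it; for orientation, though, here is the route their proof takes. A statement this strong cannot be obtained from the ordinary (additive) adversary method, nor---for general $f$---from the polynomial method; it goes through the \emph{multiplicative adversary method}, introduced by \v{S}palek and refined by Lee and Roland~\cite{lee2013strong}. The first ingredient is the multiplicative adversary bound: for a (partial) function $f$ and a parameter $\lambda>1$, one defines a quantity $\mathrm{MADV}_\lambda(f)$ as the optimum of a semidefinite program over positive semidefinite ``adversary matrices'' $\Gamma$ obeying a spectral constraint relative to each query operator, the objective measuring how much $\Gamma$ must grow \emph{multiplicatively} as the algorithm passes from its initial state to a state encoding the correct answers. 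The basic lemma of the method is that any $T$-query quantum algorithm computing $f$ with success probability at least $p$ forces
\[
\lambda^{2T} \;\ge\; \Omega(p)\cdot \mathrm{MADV}_\lambda(f),
\qquad\text{equivalently}\qquad
T \;\ge\; \frac{1}{2\log\lambda}\log\!\bigl(\Omega(p)\cdot \mathrm{MADV}_\lambda(f)\bigr).
\]

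The second ingredient is that, unlike the additive adversary bound, the multiplicative one tensors perfectly: $\mathrm{MADV}_\lambda(f^{\otimes c}) \ge \mathrm{MADV}_\lambda(f)^{c}$, proved by taking the $c$-fold tensor power of an optimal single-copy adversary matrix and checking that the defining constraints survive tensoring. The third ingredient is a quantitative comparison, again from \cite{lee2013strong}, between the multiplicative bound and ordinary quantum query complexity: for each fixed $\lambda>1$ one has $\log\mathrm{MADV}_\lambda(f) = \Omega_\lambda\!\bigl(\mathrm{ADV}^{\pm}(f)\bigr) = \Omega_\lambda\!\bigl(Q_{1/4}(f)\bigr)$, the last equality using the tight characterization $Q(f)=\Theta(\mathrm{ADV}^{\pm}(f))$ of Lee, Mittal, Reichardt, \v{S}palek and Szegedy. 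Combining the three, fix $\lambda>1$ to a suitable absolute constant and suppose a $T'$-query algorithm computes the ``$c$ copies'' $f^{\otimes c}$ with success probability $p$, where $T'=Q_{1/4}(f)$. Then $\lambda^{2T'}\ge \Omega(p)\cdot \mathrm{MADV}_\lambda(f)^{c}\ge \Omega(p)\cdot 2^{\Omega_\lambda(c\,T')}$, whence $\log(1/p) \ge \Omega_\lambda(c\,T') - 2T'\log\lambda - O(1) = \Omega(c)$; for $c$ below an absolute constant threshold the conclusion $p = O((3/4)^{c/2})$ is automatically true after inflating the hidden constant. Tuning $\lambda$ so the decay rate beats $\tfrac12\log_2(4/3)$ then yields $p = O\!\bigl((3/4)^{c/2}\bigr)$.

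The step I expect to be the real obstacle---and the technical heart of \cite{lee2013strong}---is the third ingredient: lower bounding $\mathrm{MADV}_\lambda(f)$ by $2^{\Omega_\lambda(Q_{1/4}(f))}$ with the correct dependence on $\lambda$, which requires both the SDP duality underpinning the multiplicative method and the full strength of the adversary characterization of quantum query complexity. Reproducing this is well outside the scope of the present note, so I would take Theorem~\ref{theo:sdpt} from \cite{lee2013strong} as a black box.
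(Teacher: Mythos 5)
Your treatment matches the paper exactly: Theorem~\ref{theo:sdpt} is imported from Lee and Roland~\cite{lee2013strong} as a black box, with no proof given in the paper, and your decision to cite it rather than reprove it is precisely what the paper does. The expository sketch of the multiplicative adversary route is a reasonable (if not verified-in-detail) summary of how \cite{lee2013strong} establishes the result, but it plays no role in the paper and need not be included.
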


\begin{lemma}
	There are no $\QSZK$ protocols for $\fcol$ in which the verifier only makes $\poly(n)$ queries to the input.
\end{lemma}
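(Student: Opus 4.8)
The plan is to mirror the $\SZK$/\PSimon\ argument of the previous subsection, replacing its two ingredients: in place of the Sahai--Vadhan completeness theorem I would invoke the $\QSZK$-completeness of the Quantum State Distinguishability problem together with the polarization lemma for $\QSZK$, and in place of Lemma~\ref{lm:hard-simon} I would invoke the strong direct product theorem, Theorem~\ref{theo:sdpt}. The target of the reduction is to turn a hypothetical $\QSZK$ protocol for $\fcol$ whose verifier makes $T \le \poly(n)$ queries into a $\poly(n)$-query quantum algorithm that outputs $\Collision^{\otimes c}(x) \in [2^c]$ with probability $1/\poly(n)$, and then to contradict Theorem~\ref{theo:sdpt} using the fact that $c = 10n$ is large. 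The reason copies are needed here (unlike for \PSimon, where Lemma~\ref{lm:hard-simon} already gives exponentially small success) is that a single copy of $\Collision_n$ has Boolean output and can be guessed with probability close to $1/2$; only after taking $c$ copies does ``guessing'' become exponentially unlikely, and the direct product theorem is what forbids a cheap algorithm from beating guessing.

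I would assume for contradiction that such a protocol exists, with the verifier making exactly $T \le \poly(n)$ queries. By $\QSZK$-completeness and polarization this yields two quantum query circuits $Q_1, Q_2$, each making $\poly(n)$ queries to the input $z = (x,y)$, preparing mixed states $\rho_1(z), \rho_2(z)$ on the message register with $\|\rho_1(z) - \rho_2(z)\|_{\mathrm{tr}} \ge 1 - 2^{-n}$ when $\fcol(z) = 1$ and $\le 2^{-n}$ when $\fcol(z) = 0$. Now fix any valid $x$ for $\Collision^{\otimes c}$, set $\ell := \Collision^{\otimes c}(x) \in [2^c]$, let $y = 0^{2^c}$, and let $y'$ be $y$ with its $\ell$-th bit flipped to $1$; then $\fcol(x,y) = 0$ and $\fcol(x,y') = 1$, so $\|\rho_1(x,y) - \rho_2(x,y)\|_{\mathrm{tr}} \le 2^{-n}$ and $\|\rho_1(x,y') - \rho_2(x,y')\|_{\mathrm{tr}} \ge 1 - 2^{-n}$. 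By the triangle inequality there is an $i \in \{1,2\}$ with $\|\rho_i(x,y) - \rho_i(x,y')\|_{\mathrm{tr}} \ge 1/4$.

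Next I would extract $\ell$ from $Q_i$ via the standard hybrid (BBBV) argument. Since $y$ and $y'$ differ only in the bit of $z$ corresponding to $y_\ell$, and partial trace cannot increase trace distance, having $\|\rho_i(x,y) - \rho_i(x,y')\|_{\mathrm{tr}} \ge 1/4$ forces $\sum_{t=1}^{T} q_t(\ell) = \Omega(1/T)$, where $q_t(\ell)$ is the squared norm of the projection, onto address $\ell$ of the \ypart\ part, of $Q_i$'s state immediately before its $t$-th query when $Q_i$ is run on $(x,y)$. Hence the following is a legal $\le T$-query quantum algorithm: guess $i \in \{1,2\}$ uniformly, guess $t \in [T]$ uniformly, run $Q_i$ on $(x, 0^{2^c})$ up to just before its $t$-th query, measure the address register of that query, and output the outcome interpreted as an element of $[2^c]$ when it falls in the \ypart\ part (and ``fail'' otherwise). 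Conditioned on the correct $i$, this outputs $\ell$ with probability at least $\frac{1}{T}\sum_t q_t(\ell) = \Omega(1/T^2)$, so overall it evaluates $\Collision^{\otimes c}(x)$ with probability $\Omega(1/T^2) = 1/\poly(n)$ using $\poly(n)$ queries.

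Finally I would derive the contradiction. Since $Q_{1/4}(\Collision_n) \ge Q_{1/3}(\Collision_n) = \Omega(N^{1/3})$, for all large $n$ the extracted algorithm makes far fewer than $Q_{1/4}(\Collision_n)$ queries; padding it with dummy queries and applying Theorem~\ref{theo:sdpt} shows it can evaluate $\Collision_n^{\otimes c}$ with success probability only $O\big((3/4)^{c/2}\big) = O\big((3/4)^{5n}\big) = 2^{-\Omega(n)}$, which is smaller than $1/\poly(n)$ for large $n$ --- a contradiction. The main obstacle is the middle step: I need to state the hybrid argument carefully in the \emph{mixed-state} sampling setting (the $\QSZK$ output state is a reduced state obtained by tracing out part of a pure state produced by a query circuit) and verify that the extracted success probability $1/\poly(n)$ carries no hidden dependence on $c$, so that the $c = 10n$ copies genuinely beat $O((3/4)^{c/2})$. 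A minor additional point is to confirm that Theorem~\ref{theo:sdpt} applies to the promise function $\Collision_n$ and to algorithms making fewer than $T'$ queries, both of which present no difficulty.
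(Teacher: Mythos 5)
Your proposal is correct and follows essentially the same route as the paper's own proof: the paper likewise uses Watrous's characterization of $\QSZK$ to obtain the two $\poly(n)$-query-preparable states, flips the $\ell$-th bit of $y=0^{2^c}$, applies the triangle inequality and the BBBV hybrid argument to extract $\ell=\Collision^{\otimes c}(x)$ with probability $\Omega(1/T^2)$, and contradicts Theorem~\ref{theo:sdpt} with $c=10n$. No substantive differences to report.
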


\begin{proof}
	
	By Theorem~\ref{theo:sdpt}, and $Q_{1/4}(\Collision_n) = \Omega(N^{1/3}) = \Omega(2^{n/3})$, we can see any quantum algorithms with $\poly(n)$ queries can solve $\Collision^{\otimes c}$ with success probability at most $O(\left(3/4\right)^{c/2}) = O(2^{-n})$ (recall that $c = 10n$). 
	
	For the contradiction, suppose there is a $\QSZK$ protocol for $\fcol$ such that the verifier makes only $\poly(n)$ queries to the input. Then we are going to show there is a quantum algorithm with $\poly(n)$ queries computing $\Collision^{\otimes c}$ correctly with probability at least $1/\poly(n)$, which contradicts Theorem~\ref{theo:sdpt}.
	
	Let the input be $z=(x,y)$, in which $x$ is an input to $\Collision^{\otimes c}$ and $y \in \{0,1\}^{2^c}$.
	As shown by Watrous \cite{watrous2002limits}, such a $\QSZK$ protocol implies that there are two mixed quantum states $\xi_1(z)$ and $\xi_2(z)$, which can both be prepared using $\poly(n)$ queries to the input, such that $\| \xi_1(z) - \xi_2(z) \|_{tr} \ge 1-2^{-n}$ when $\fcol(z) = 1$, and $\| \xi_1(z) - \xi_2(z) \|_{tr} \le 2^{-n}$ when $\fcol(z) = 0$.
	
	Now, let $x$ be a valid input to $\Collision^{\otimes c}$, then consider running $\xi_1$ and $\xi_2$ on input $z=(x,y)$, such that $y=0^{2^c}$. Clearly, by definition, we have $\fcol(z)=0$ hence $\|\xi_1(z) - \xi_2(z)\|_{tr} \le 2^{-n}$ in that case. Let $\ell = \Collision^{\otimes c}(x)$, then if we change the $\ell^{th}$ bit of $y$ to $1$, we immediately get an input $z'$ such that $\fcol(z') = 1$, so $\|\xi_1(z') - \xi_2(z')\|_{tr} \ge 1- 2^{-n}$. By triangle inequality, we have either $\|\xi_1(z) - \xi_1(z')\|_{tr} \ge 1/3$ or $\|\xi_2(z) - \xi_2(z')\|_{tr} \ge 1/3$.
	
	Now we describe our algorithm for computing $\Collision^{\otimes c}$ with a non-negligible probability. Without loss of generality, we can assume both $\xi_1$ and $\xi_2$ require exactly $T \le \poly(n)$ queries to prepare. 
	
	Given an input $x$ to $\Collision^{\otimes c}$, let $\ell=\Collision^{\otimes c}(x)$. For all $i \in [2]$, $w \in [2^c]$ and $t \in [T]$, we define the {\em query magnitude} $m_{i,w,t}$, to be the probability that the preparation algorithm for $\xi_i$ would be found querying the $w^{th}$ bit in the \ypart\ part of the input, were we to measure in the standard basis before the $t^{th}$ query, when it is applied to input $z=(x,0^{2^c})$.
	
	We first guess a random index $i \in [2]$. Then as discussed above, with probability $1/2$, we have $\|\xi_{i}(z) - \xi_{i}(z')\|_{tr} \ge 1/3$, in which  $z'$ is obtained by changing the $\ell^{th}$ bit to $1$ in the \ypart\ part of $z$. Using these facts and the hybrid argument in \cite{bennett1997strengths}, it follows that
	
	$$
	\sum_{t=1}^{T} \sqrt{m_{i,\ell,t}} \ge \Omega(1).
	$$ 
	
	Then by Cauchy-Schwarz inequality, we have
	$$
	\sum_{t=1}^{T} m_{i,\ell,t} \ge \Omega\left(\frac{1}{T}\right).
	$$
	
	This means that, if we pick a random $i \in[2]$ and $t \in [T]$, run $\xi_i$ until the $t^{th}$ query on input $z=(x,0^{2^c})$, and then measure in the standard basis, we will observe $\ell = \Collision^{\otimes c}(x)$ with probability at least $\Omega(1/T^2)$. Then we get an algorithm computing $\Collision^{\otimes c}$ with $\poly(n)$ queries and at least $1/\poly(n)$ probability. This completes the proof.
	
\end{proof}



\section{Acknowledgment}

We would like thank Scott Aaronson for several helpful discussions during this work and detailed comments on an early draft of this paper.
	
	\bibliographystyle{alpha}
	\bibliography{team} 
	
\end{document}